\title{Generalized Autoregressive Moving Average Models with GARCH Errors}
\date{This version: May, 2021}	
\author{ {Tingguo Zheng} \\
	\small Department of Statistics and Data Science, School of Economics, Xiamen University, Xiamen, Fujian Province, China\\
	\small Wang Yanan Institute for Studies in Economics, Xiamen University, Xiamen, Fujian Province, China\\
	\small Email: \texttt{zhengtg@gmail.com} \\
	\And
	{Han Xiao} \\
	\small Department of Statistics, Rutgers University, Piscataway, NJ 08854, USA\\
	\small Email: \texttt{hxiao@stat.rutgers.edu} \\
    \And
	{Rong Chen} \\
	\small Department of Statistics, Rutgers University, Piscataway, NJ 08854, USA\\
	\small Email: \texttt{rongchen@stat.rutgers.edu} \\
}
\newcommand{\F}{\mathcal{F}}
\newcommand{\Var}{\mathrm{Var}}
\newcommand{\logit}{\mathrm{logit}}
\newcommand{\hA}{{\boldsymbol{A}}}
\newcommand{\hB}{{\boldsymbol{B}}}
\newcommand{\vX}{{\mathfrak{X}}}
\newcommand{\sB}{\mathcal{B}}
\newcommand{\by}{\mathbf{y}}
\newcommand{\bfvarepsilon}{\boldsymbol{\varepsilon}}
\newcommand{\bfsigma}{\boldsymbol{\sigma}}
\theoremstyle{plain}
  \newtheorem{theorem}{Theorem}
  \newtheorem{lemma}{Lemma}
\theoremstyle{definition}
  \newtheorem{rem}{Remark}
\begin{document}

\maketitle

\begin{abstract}
One of the important and widely used classes of models for non-Gaussian time series is the
generalized autoregressive model average models (GARMA), which specifies an ARMA structure for the
conditional mean process of the underlying time series. However, in many applications one often
encounters conditional heteroskedasticity. In this paper we propose a new class of models, referred to as GARMA-GARCH models, that jointly specify both the conditional mean and conditional variance
processes of a general non-Gaussian time series. Under the general modeling framework, we propose
three specific models, as examples, for proportional time series, nonnegative time series, and skewed and heavy-tailed financial time series. Maximum likelihood estimator (MLE) and quasi Gaussian MLE (GMLE) are used to estimate the parameters. Simulation studies and three applications
are used to demonstrate the properties of the models and the estimation procedures.
\end{abstract}

\keywords{Generalized ARMA model; GARMA-GARCH model; Non-negative time series; Proportional time series; Realized volatility; Stock returns}

\newpage

\section{Introduction}

The traditional autoregressive and moving average (ARMA) models for time series analysis assume the conditional mean of $y_t$ given the past information depends linearly on the past observations and past innovations. To capture empirical characteristics such as serial dependence between squared innovations, volatility clustering and other heteroskedasticity in many time series encountered in the real data, especially in economic and financial applications, additional assumptions on the conditional variance are often introduced and included in the model, resulting in models such as the generalized autoregressive conditional heteroskedasticity (GARCH) models proposed by \cite{Engle1982} and \cite{Bollerslev1986} and stochastic volatility (SV) models proposed by \cite{Taylor1982,Taylor1986}.  The resulting combined ARMA and GARCH processes are often referred to as the ARMA-GARCH models. 
In dealing with non-Gaussian time series exhibiting heavy tailed and asymmetric behaviors, the innovation (error) process in the ARMA-GARCH models is often assumed to follow certain non-Gaussian distributions, including Student-$t$ \citep{Bollerslev1987}, generalized error distribution \citep{Nelson1991}, skewed Student-$t$ \citep{LL2001}, and others. These approaches are often referred to as the
{\it innovations-based} approaches for dealing with non-Gaussian time series.

A different approach for modeling non-Gaussian time series is the {\it data-based} approach, in which one first assumes a non-Gaussian conditional distribution of $y_t$ given the past, then models the temporal evolution of the parameters of the distribution. Such an approach is often used to model time series of count, time series of positive random variables and proportional time series, including the autoregressive conditional duration models \citep{ER1998}, multiplicative error models \citep{Engle2002,EG2006}, Poisson and negative binomial models for discrete-valued or count data \citep{DDS2003,FLO2006,DW2009,FRT2009,FF2010,QLZ2020}, Beta autoregressive moving average (ARMA) models (Rocha and Cribari-Neto, 2009; Scher, Cribari-Neto, Pumi and Bayer, 2020), and many others.
A general class of such models is the Generalized ARMA (GARMA) model of \cite{BRS2003} and its
martingalized version (M-GARMA) model of \cite{ZXC2015}. Through a link function, GARMA and M-GARMA assumes an ARMA form for the conditional mean process.

Similar to the need of extending the approach of modeling only the conditional mean to jointly modeling the conditional mean and variance in the innovation-based approach mentioned above, there is also a need to extend the GARMA models to include the modeling of the conditional variance process, in order to address conditional heteroskedasticity often observed in applications. This is actually more important in the data-based approaches since here we directly model the conditional distribution. If the conditional distribution involves more than one parameter, then the conditional mean itself is often not sufficient to capture the time varying behavior of the conditional distribution. For example, if the distribution is parametrized by its mean and some other parameters, the GARMA model would need to require all the parameters except the mean to be fixed, and only the mean parameter to be time varying and carry the past information. Such an assumption limits the modeling flexibility for real applications.

In this paper we propose a data-based approach for modeling non-Gaussian time series by jointly modeling the conditional mean and variance processes, through a link function. It is an extension of the GARMA model, with an additional GARCH structure for the conditional variance process.

It is not straightforward to include a GARCH structure under the GARMA model framework of \cite{BRS2003}. The reason is that if the specified link function $h(\cdot)$ in the GARMA formation is not identity, the induced error sequence under the link function $\{h(y_t)-E[h(y_t)\mid \F_{t-1}]\}$ is not a martingale difference sequence (MDS) and thus its squared counterparts as a measure of the conditional variance are complicate and difficult to interpret. On the other hand, the Martingalized-GARMA (M-GARMA) model proposed by \cite{ZXC2015} provides a promising framework, allowing the error sequence being an MDS.

Based on the preceding discussions, this paper formally proposes the GARMA-GARCH model under the M-GARMA framework of \cite{ZXC2015} to capture the conditional heteroskedasticity of non-Gaussian time series.
We also introduce three specific models under the general framework: a log-Gamma-GARMA-GARCH model for non-negative time series, a logit-Beta-GARMA-GARCH model for proportional time series, and a generalized hyperbolic skew Student-$t$ (GHSST) distribution based GARMA-GARCH model for financial time series. We then present the maximum likelihood estimator (MLE) and quasi Gaussian MLE (GMLE) for estimating the parameters in the GARMA-GARCH models.

The rest of this paper is organized as follows. Section 2 introduces the GARMA-GARCH model, together with the three specific models and their properties. In Section 3, we introduce the MLE and GMLE for GARMA-GARCH models. Second 4 presents some simulation studies to demonstrate the finite sample performance of the estimators. In Section 5, empirical applications are carried out for realized volatility, U.S. personal saving rate and stock returns by using the three specific models, respectively. The last section concludes.

\section{The Model}

\subsection{GARMA-GARCH model}

The GARMA-GARCH model relies on a given parametric family of distributions $f(y\mid\gamma,\varphi)$ with parameters $\gamma$ and $\varphi$, and a y-link function $h(\cdot)$. Here both $\gamma$ and $\varphi$ can be vectors. Define the functions $g_\varphi(\gamma):=E[h(Y)]$ and $V_\varphi(\gamma):=\Var[h(Y)]$, where $Y$ follows the distribution $f(y\mid\gamma,\varphi)$. The GARMA-GARCH model assumes that the conditional distribution $p(y_t\mid\F_{t-1})$ is
\begin{equation}
    \label{mgarmagarch-dist}
    p(y_t\mid\F_{t-1})=f(y_t\mid\gamma_{t},\varphi),
\end{equation}
where $\gamma_{t}$ is determined by $\F_{t-1}$ through the conditional expectation and variance of $h(y_t)$, i.e.
\begin{align}
    \label{mgarmagarch-mean}
    g_{\varphi}(\gamma_t)=E[h(y_t)\mid \F_{t-1}]&=\phi_0+\sum_{j=1}^p\phi_jh(y_{t-j})+\sum_{j=1}^q\delta_j\varepsilon_{t-j},\\
    \label{mgarmagarch-variance}
    V_\varphi(\gamma_t)=\Var[h(y_t)\mid \F_{t-1}]&=\omega+\sum_{i=1}^r\alpha_i\varepsilon_{t-i}^2+\sum_{j=1}^s\beta_j\sigma_{t-j}^2,
\end{align}
where $\varepsilon_t:=h(y_t)-g_\varphi(\gamma_t)$, and $\sigma_t^2:=V_\varphi(\gamma_t)$. We  assume that $\varphi$ is a vector of time invariant parameters, and $\gamma_t$ can be uniquely solved from \eqref{mgarmagarch-mean} and \eqref{mgarmagarch-variance}, once $\F_{t-1}$ is given. Note that in most cases, this would require that $\gamma_t$ is a 2-dimensional vector.

We use $\theta$ to denote the set of all model parameters
$$\{\varphi,\phi_0,\phi_1,\ldots,\phi_p,\delta_1,\ldots,\delta_q,\omega,\alpha_1,\ldots,\alpha_r,\beta_1,\ldots,\beta_s\}.$$
Since \eqref{mgarmagarch-variance} is regarding the conditional variance, we impose the natural requirement that $\omega$ is strictly positive, and $\alpha_i$ ($i=1,\ldots,r$) and $\beta_j$ ($j=1,\ldots,s$) are nonnegative.

Adding $\varepsilon_t=h(y_t)-g_\varphi(\gamma_t)$ on both sides of \eqref{mgarmagarch-mean} leads to the following ARMA representation of $\{h(y_t)\}$.
\begin{equation}
    \label{eq:arma}
    h(y_t)=\phi_0+\sum_{j=1}^p\phi_jh(y_{t-j})+\varepsilon_t+\sum_{j=1}^q\delta_j\varepsilon_{t-j}.
\end{equation}
The joint model \eqref{eq:arma} and \eqref{mgarmagarch-variance} is in the standard ARMA-GARCH model form, for the transformed time series $h(y_t)$, except that the innovation process $\varepsilon_t/\sigma_t$ here is a martingale difference process, instead of an i.i.d sequence.

Similar to the GARCH model, the equation \eqref{mgarmagarch-variance} can also be represented in an ARMA form. For this purpose we define $\zeta_t:=\varepsilon_t^2-\sigma_t^2$, and add $\zeta_t$ on both sides of \eqref{mgarmagarch-variance}, resulting in the following equivalent representation:
\begin{equation}
    \label{eq:garch_arma}
    \varepsilon_t^2 = \omega+\sum_{i=1}^{r\vee s}(\alpha_i+\beta_i)\varepsilon_{t-i}^2 + \zeta_t - \sum_{j=1}^s\beta_j\zeta_{t-j},
\end{equation}
where $r\vee s$ denotes the maximum of $r$ and $s$.

Some remarks on different issues of the model are in order:

\begin{rem}[\it Relationship to the M-GARMA Model] The GARMA-GARCH model is a direct extension of the M-GARMA model of \cite{ZXC2015}, with the additional equation \eqref{mgarmagarch-variance} for the conditional variance of $h(y_t)$. The M-GARMA model only allows a one dimensional time varying parameter $\gamma_t$ in \eqref{mgarmagarch-dist}. The GARMA-GARCH model allows two parameters to be time varying and to depend on past information. It also closely mimics the standard ARMA-GARCH model widely used in applications.
\end{rem}

\begin{rem}[\it Relationship to the innovation-based ARMA-GARCH] The innovation-based ARMA-GARCH model assumes the form \eqref{eq:arma} and \eqref{mgarmagarch-variance} for the transformed series $h(y_t)$ with the error term $\varepsilon_t=\sigma_t e_t$, where $e_t$ are independent,  following a common distribution such as normal, Student's $t$, and skewed $t$. In this case, the GARCH process $\{\varepsilon_t\}$ is said to be strong, following definition of \cite{FZ2010} and \cite{DN1993}.
For such models, the conditional distribution of $y_t$ given $\F_{t-1}$
can be found through the inverse transformation of the transformed random variable $h(y_t)$.
In certain non-Gaussian time series such as time series of counts, the data-based approach taken here is more natural and the model is easier to interpret. Under the GARMA-GARCH model, the ``innovation" process $\{\varepsilon_t/\sigma_t\}$ is not an i.i.d. sequence, but a MDS, and the GARCH component is said to be semi-strong. The parallel between the two models allow us to obtain a quasi likelihood estimator for the GARMA-GARCH model by assuming that $h(y_t)$ follows a ARMA-GARCH model with i.i.d. Gaussian innovations.
\end{rem}

\begin{rem}[\it Mean and variance link functions]
Define
\begin{align}
   \label{mreg-linkfunc}
    \mu_t:& =g_{\varphi}(\gamma_t)=E[h(y_t)\mid\F_{t-1}],\\
        \label{mreg-varfunc}
    \sigma_t^2: & = V_\varphi(\gamma_{t})=\Var(\varepsilon_t\mid\F_{t-1})=\Var[h(y_t)\mid\F_{t-1}],
\end{align}
These functions link the
conditional mean and variance of $h(y_t)$ 
given $\F_{t-1}$ to the underlying time varying parameter $\gamma_t$.
Different to the M-GARMA model, the link function $\mu_t$ here is not necessarily a function of conditional mean of $y_t$.
If the y-link function is identity, the mean and variance link functions are indeed directly
linked to the conditional mean and variance of $y_t$. Otherwise, they differ greatly.
The link functions of course depend on the form of parametrization of the conditional distribution
in \eqref{mgarmagarch-dist}.
\end{rem}

\begin{rem}[\it Solving time-varying parameters]
It is important to be able to solve the time-varying parameters $\gamma_t$ using \eqref{mgarmagarch-mean} and \eqref{mgarmagarch-variance}, given past information. This is because they are needed to evaluate the likelihood function of the model and so to carry out the maximum likelihood estimation. That is, we need to solve $\gamma_t=(\gamma_{1t},\gamma_{2t})$ based on given $\mu_t$ and $\sigma_t^2$ using the link functions
\eqref{mreg-linkfunc} and \eqref{mreg-varfunc} as a system of equations,
via exact or numerical methods. In practice, when there are more than one solution for the system, we can select the solution that maximizes the corresponding likelihood function value of the GARMA-GARCH model.
\end{rem}

\begin{rem}[\it The choice of the y-link function] One of the key components of a good GARMA-GARCH model is a properly chosen y-link function $h(\cdot)$, which specifies the explicit links between $\gamma_t$ and $(\mu_t, \sigma^2_t)$. The y-link function is similar to the link function of the generalized linear models of \cite{MN1989}. It is a model assumption that is based on the problem at hand, and can be checked and sometimes justified with sensible model validation statistics and procedures.
\end{rem}

\begin{rem}[\it Dimension of the time-varying parameters]
In general, the dimension of the time-varying parameter $\gamma_t$ is assumed to be two under the GARMA-GARCH model, as discussed earlier.
If the dimension of $\gamma_t$ is one, one may choose to use either the conditional mean relationship \eqref{mgarmagarch-mean} or the conditional variance relationship \eqref{mgarmagarch-variance}, but not both. An alternative is to adopt a generalized moment method approach by solving the one dimensional $\gamma_t$ with two moment equations. This is an interesting problem to be further investigated.
When the dimension of $\gamma_t$ is larger than 2, two equations with the mean and variance link functions are not enough to determine the time-varying parameters. In this case, we may extend the GARMA-GARCH model to include modeling assumptions of the higher moments of $y_t$ given $\F_{t-1}$.
\end{rem}

\begin{rem}[\it Extensions to multivariate non-Gaussian time series]
In this paper we study univariate time series $y_t$ through GARMA-GARCH model. The model can be easily extended to multivariate time series, with a multivariate y-link function $h(\cdot)$, and a vector ARMA form in place of \eqref{mgarmagarch-mean} and a multivariate GARCH form in place of
\eqref{mgarmagarch-variance}. To avoid various ambiguities in multivariate ARMA, vector AR models can be used \citep{Lu2005,Tsay2014}. There are also many different multivariate GARCH models, including those in \cite{AHL2020,Engle2002b,EK1995}.
The dependencies among the components of $\mu_t$ (as a vector) and $\sigma^2_t$ (as a covariance matrix) induce dependencies among the components of $y_t$.
\end{rem}

\subsection{Some specific models}\label{GARMA-GARCH}

{In this section we introduce three specific models under the general GARMA-GARCH model framework, with specific choices of the conditional distribution $f(\cdot)$, and the y-link function $h(\cdot)$. They are designed for certain types of non-Gaussian time series.}

To simplify the model formula, define the following characteristic polynomials: $\phi(z)=1-\phi_1 z-\cdots-\phi_p z^p$, $\delta(z)=1+\delta_1 z+\cdots+\delta_q z^q$, $\alpha(z)=\alpha_1 z+\cdots+\alpha_r z^r$, and $\beta(z)=1-\beta_1 z-\cdots-\beta_s z^s$. Let $L$ be the backward shift operator such that $Ly_t=y_{t-1}$.

\vspace{1em}

\noindent
{\bf (1) Log-Gamma-GARMA-GARCH model: } Suppose $y_t$ is a non-negative continuous random variable. Using Gamma distribution as the conditional distribution and log function as the y-link function, the log-Gamma-GARMA($p,q$)-GARCH($r,s$) model for a time series $\{y_t\}$ is given by
\begin{align}\label{gammadist}
    y_t\mid\F_{t-1}&\sim\mathrm{Gam}(c_t,c_t/\eta_t),\quad
    \phi(L)\log y_t=\phi_0+\delta(L)\varepsilon_t,\quad
    \beta(L)\sigma_t^2=\omega+\alpha(L)\varepsilon_t^2,
\end{align}
with $\varepsilon_t=\log y_t-\mu_t=\log y_t-g(\eta_t,c_t)$, where $c_t$ and $c_t/\eta_t$ are the shape and rate parameters of Gamma distribution respectively, $\eta_t$ and $\mu_t$ are the conditional expectations of $y_t$ and $\log y_t$ respectively, and $\sigma_t^2$ is the conditional variance of $\varepsilon_t$ or $h(y_t)$. Moreover, the resulting {conditional mean} 
and variance link functions are expressed as
\begin{align}
    \label{gamma-linkfunc}
    g(\eta_t,c_t)&=\mu_t=\log\eta_t+\psi(c_t)-\log c_t,\\
    \label{gamma-varfunc}
    V(\eta_t,c_t)&=\sigma_t^2=\psi_1(c_t),
\end{align}
where $\psi(\cdot)$ is the digamma function, and $\psi_1(\cdot)$ is the trigamma function.

In this model \eqref{gammadist}, we have no fixed parameter $\varphi$, and $(c_t,\eta_t)$ are the time-varying parameters of the distribution. Given the values of $\mu_t$ and $\sigma_t^2$, we can first obtain the unique root $c_t=\psi_1^{-1}(\sigma_t^2)$ from \eqref{gamma-varfunc} via the numerical methods or the bisection method, and then substituting it into the link function \eqref{gamma-linkfunc} to calculate $\eta_t=\exp[\mu_t+\log c_t-\psi(c_t)]$.

Without the conditional variance process, \cite{ZXC2015} proposed the log-Gamma-M-GARMA$(p,q)$
model in which the conditional distribution is assumed to be $\mathrm{Gam}(c,c/\eta_t)$ with a time invariant parameter $c$, the y-link function used is the log function, and the conditional mean process assumes an ARMA form. Comparing to the log-Gamma-GARMA-GARCH model, its shape parameter $c$ is time-invariant, and it does not have all the GARCH parameters in the conditional variance process.

\begin{rem}\label{rk:gamma0}
Here we point out a special feature of the log-Gamma-M-GARMA process. Because the y-link function is $\log(\cdot)$, the conditional distribution of $\varepsilon_t$ is actually completely determined by the shape parameter $c_t$, no matter what value $\eta_t$ takes. To see this, note that $\tilde y_t:=(c_t/\eta_t)y_t\sim\mathrm{Gam}(c_t,1)$, and $$\varepsilon_t=\log y_t-g(\eta_t,c_t) = y_t - \psi(c_t) - \log\eta_t + \log c_t = \log\tilde y_t -\psi(c_t).$$
On the other hand, according to \eqref{gamma-varfunc}, $c_t$ is uniquely determined by $\sigma_t^2$ (note that $\psi_1(\cdot)$ is a strictly decreasing function on the positive real line). Consequently, we can give an equivalent definition of the log-Gamma-GARMA-GARCH model. We first define the process $\{\varepsilon_t\}$ according to
\begin{align}\label{gammadist0}
    \tilde y_t\mid\F_{t-1}&\sim\mathrm{Gam}(c_t,1),\quad
    \varepsilon_t=\tilde y_t-\psi(c_t),\quad
    \beta(L)\sigma_t^2=\omega+\alpha(L)\varepsilon_t^2.
\end{align}
Once $\{\varepsilon_t\}$ is given, the $\{h(y_t)\}$ process can be constructed
by \eqref{eq:arma} using the pre-generated $\{\varepsilon_t\}$ process. This equivalent definition reveals that $\{\varepsilon_t\}$ can be defined without referring to the conditional mean equation \eqref{mgarmagarch-mean}, and $\{h(y_t)\}$
is simply a linear ARMA process with innovations $\{\varepsilon_t\}$. This feature simplifies the investigation of various probabilistic properties of the process.
\end{rem}

\vspace{1em}

\noindent
{\bf (2) Logit-Beta-GARMA-GARCH model: } This model is based on the logit-Beta-M-GARMA model proposed by \cite{ZXC2015}. Suppose a proportional time series $\{y_t\}$ lies in interval (0,1). Using {the Beta distribution as the conditional distribution} and the logit transformation as the y-link function, i.e., $h(y_t)=\logit(y_t)=\log[y_t/(1-y_t)]$, the logit-Beta-GARMA($p,q$)-GARCH($r,s$) model is given by
\begin{align}\label{betadist}
    y_t\mid\F_{t-1}&\sim\mathrm{Beta}(a_t,b_t),\quad
    \phi(L)\logit(y_t)=\phi_0+\delta(L)\varepsilon_t,\quad
    \beta(L)\sigma_t^2=\omega+\alpha(L)\varepsilon_t^2,
\end{align}
with $\varepsilon_t=\logit(y_t)-\mu_t=\logit(y_t)-g(a_t,b_t)$, where $a_t$ and $b_t$ are two positive parameters of the Beta distribution, $\mu_t$ is the conditional expectation of $\logit(y_t)$, and $\sigma_t^2$ is the conditional variance of $\varepsilon_t$ or $\logit(y_t)$.
The resulting mean and variance functions are expressed as
\begin{align}
    \label{beta-linkfunc}
    g(a_t,b_t)&=\mu_t=\psi(a_t)-\psi(b_t),\\
    \label{beta-varfunc}
    V(a_t,b_t)&=\sigma_t^2=\psi_1(a_t)+\psi_1(b_t),
\end{align}
where $\psi(\cdot)$ and $\psi_1(\cdot)$ are the digamma and trigamma functions.

In this case, we also have no fixed parameter $\varphi$, but two time-varying parameters of the distribution, $a_t$ and $b_t$. Although this system of equations \eqref{beta-linkfunc} and \eqref{beta-varfunc} is nonlinear, the solution given the values of $\mu_t$ and $\sigma_t^2$ is unique. Since both digamma and trigamma are strictly monotonic on $(0,\infty)$,
there must exist a unique solution such that $a_t=\psi^{-1}[\mu_t+\psi(b_t)]$ (for a given $b_t$), and also a unique solution of $b_t$ to the nonlinear equation $\sigma_t^2=\psi_1\{\psi^{-1}[\mu_t+\psi(b_t)]\}+\psi_1(b_t)$
since $\psi(\cdot)$ is monotonically increasing and $\psi_1(\cdot)$ is monotonically decreasing.
In practice, since the system of \eqref{beta-linkfunc} and \eqref{beta-varfunc} is highly nonlinear, numerical methods are used to solve both $a_t$ and $b_t$ simultaneously.

Again, without the conditional variance process, \cite{ZXC2015} proposed the logit-Beta-M-GARMA$(p,q)$ model in which the conditional distribution is assumed to be $\mathrm{Beta}(\tau\eta_t,(1-\tau)\eta_t)$ with a time invariant parameter $\tau$, the y-link function used is the logit function, and the conditional mean process assumes an ARMA form. Comparing to the logit-Beta-GARMA-GARCH model, it has one extra time-invariant parameter $\tau=a_t+b_t$ (with $a_t=\tau\eta_t, b_t=(1-\tau)\eta_t$), but does not have all the GARCH parameters in the conditional variance process.

\vspace{1em}

\noindent
{\bf (3) GHSST-GARMA-GARCH model: } To model leptokurtic and skewed financial time series $\{y_t\}$, we consider a generalized hyperbolic skew Student-t (GHSST) distribution proposed by \cite{AH2006}. It is shown that the GHSST distribution can exhibit unequal thickness in both tails, contrary to other skewed extensions of the Student-$t$, and 
that this offers more modeling flexibility. Using the identity y-link function, we propose the following GHSST-GARMA($p,q$)-GARCH($r,s$) model
\begin{align}\label{GHSST-GARCH}
    y_t\mid\F_{t-1}&\sim\mathrm{GHSST}(\xi_t,\varsigma_t,\nu,\tau),\quad
    \phi(L)y_t=\phi_0+\delta(L)\varepsilon_t,\quad
    \beta(L)\sigma_t^2=\omega+\alpha(L)\varepsilon_t^2,
\end{align}
where $\varepsilon_t=y_t-\mu_t$.
The conditional density of $y_t$ follows the GHSST distribution
\begin{align}\label{GHSSTdens}
    f(y_t\mid\xi_t,\varsigma_t,\nu,\tau)&=\frac{2^{\frac{1-\nu}{2}}\varsigma_t^\nu|\tau|^{\frac{\nu+1}{2}}K_{\frac{\nu+1}{2}}\left(\sqrt{\tau^2(\varsigma_t^2+(y_t-\xi_t)^2)}\right)\exp(\tau(y_t-\xi_t))}{\Gamma(\frac{\nu}{2})\sqrt{\pi}\left(\sqrt{\varsigma_t^2+(y_t-\xi_t)^2}\right)^{\frac{\nu+1}{2}}},
\end{align}
where $\xi_t$ and $\varsigma_t$ $(>0)$ are the time varying location and scale parameters of the GHSST distribution, and $\nu$ and $\tau$ are the degrees of freedom and shape parameters. The function $K_j(x)=\frac{1}{2}\int_0^\infty z^{j-1}e^{-\frac{x}{2}(z+z^{-1})}dz$ for $x>0$ denotes the modified Bessel function of the third kind and of order $j\in\mathbb{R}$ \citep{AS1972}. The conditional density $f(y_t\mid\xi_t,\varsigma_t,\nu,\tau)$ can be recognized as the density of a non-central (scaled) Student's $t$-distribution with $\nu$ degrees of freedom. In particular when $\tau=0$, it becomes the standard Student's $t$-distribution.

In model \eqref{GHSST-GARCH},
the corresponding conditional mean and variance functions are given by
\begin{align}
    \label{linkfunc-skew-t}
    g_{\nu,\tau}(\xi_t,\varsigma_t)&=\mu_t=\xi_t+\frac{\tau\varsigma_t^2}{\nu-2},\\
    \label{varfunc-skew-t}
    V_{\nu,\tau}(\xi_t,\varsigma_t)&=\sigma_t^2=\frac{\varsigma_t^2}{\nu-2}+\frac{2\tau^2\varsigma_t^4}{(\nu-2)^2(\nu-4)}.
\end{align}
The variance is only finite when $\nu>4$, as opposed to the symmetric Student's $t$-distribution which only requires $\nu>2$.
In this case, we can first solve the roots of the polynomial on the right hand side of \eqref{varfunc-skew-t} and select the unique positive real root as the value of $\varsigma_t$, that is,
\[
    \varsigma_t=\sqrt{\frac{-b_0+b_0\sqrt{1+8\tau^2\sigma_t^2/(\nu-4)}}{4\tau^2}},
\]
where $b_0=(\nu-2)(\nu-4)$. Then, the other time-varying parameter $\xi_t$ can be calculated as $\xi_t=\mu_t-\tau\varsigma_t^2/(\nu-2)$. 

Under the model framework presented in this study, we can estimate all parameters in \eqref{GHSST-GARCH} using the maximum likelihood estimation procedure directly. We note that \cite{Deschamps2012} proposed another ARMA-GARCH form by assuming $\varepsilon_t$ as a mixture of normal and inverted Gamma random variables, the likelihood function cannot be expressed analytically and thus the MLE estimator is infeasible. In practice under the mixture representation, one has to use the Markov Chain Monte Carlo methods for the parameter estimation.

\subsection{Stationarity and Ergodicity}

To study the stationarity and ergodicity of the GARMA-GARCH model, we consider the state space representation of the model and apply the theory of Markov Chains on a general state space. The theoretical tools involved in our analysis are covered by the classical treatise in \cite{meyn:2009}, especially Chapter~15. We will not repeat the concepts and terminologies about the Markov Chains here, but refer the readers to the aforementioned book.

To represent the GARMA-GARCH model in the state space form, we start from \eqref{eq:arma} and \eqref{eq:garch_arma}. Without loss of generality,
assume for \eqref{eq:arma}, $q=p-1$, and at least one of $\phi_p$ and $\delta_{p-1}$ is
nonzero, and similarly for \eqref{eq:garch_arma}, $s=r-1$, and at least one of $\alpha_r$ and $\beta_{r-1}$ is
nonzero.
Define the square matrices
\begin{equation}
  \label{eq:garmat}
  \Phi=
  \begin{pmatrix}
    \phi_1 & \phi_2 & \cdots & \phi_{p-1} & \phi_p \\
    1 & 0 & \cdots & 0 & 0\\
    0 & 1 & \cdots & 0 & 0\\
    \vdots & \vdots & \ddots & \vdots & \vdots\\
    0 & 0 & \cdots & 1 & 0
  \end{pmatrix},\quad
\bm A =  \begin{pmatrix}
    \alpha_1+\beta_1 & \alpha_2+\beta_2 & \cdots & \alpha_{r-1} + \beta_{r-1} & \alpha_r \\
    1 & 0 & \cdots & 0 & 0\\
    0 & 1 & \cdots & 0 & 0\\
    \vdots & \vdots & \ddots & \vdots & \vdots\\
    0 & 0 & \cdots & 1 & 0
  \end{pmatrix},
\end{equation}
and set $\delta=(1,\delta_1,\ldots,\delta_{p-1})'$ and $\beta=(1,-\beta_1,\ldots,-\beta_{r-1})'$. Let $\mu=\phi_0/(1-\sum_i\phi_i)$ and $\sigma^2=w/(1-\sum_i\alpha_i-\sum_j\beta_j)$.
We first define a Markov Chain $\{(X_t',Z_t')'\}$ on $\mathbb{R}^{p+r}$, where $X_t$ and $Z_t$ are $p$- and $r$-dimensional, respectively.
Given $X_{t-1}$ and $Z_{t-1}$, we generate $y_t$ as
\begin{equation}
  \label{eq:MC1y}
  y_t\sim f(y_t\mid\gamma_t,\varphi),
\end{equation}
where the parameter $\gamma_t$ is determined by
\begin{equation*}
    g_\varphi(\gamma_t)=\mu_t:=\mu+\delta'\Phi X_{t-1} \quad\hbox{and}\quad V_\varphi(\gamma_t)=\sigma_t^2:=\sigma^2+\beta'\bm A Z_{t-1}.
\end{equation*}
We then set $\varepsilon_t=h(y_t)-\mu_t$, $\zeta_t=\varepsilon_t^2-\sigma_t^2$, and define
\begin{equation}
  \label{eq:MC1}
  \begin{pmatrix}
  X_t \\
  Z_t
  \end{pmatrix}
  = \begin{pmatrix}
  \Phi & \bm 0 \\
  \bm 0 & \bm A
  \end{pmatrix}
  \begin{pmatrix}
  X_{t-1} \\
  Z_{t-1}
  \end{pmatrix}+
(\varepsilon_t,0,\ldots,0,\zeta_t,0,\ldots,0)'.
\end{equation}
Clearly $\{(X_t',Z_t')'\}$ is a time-homogeneous Markov chain. It
can be shown that $h(y_t)= \delta'X_t$, and the $\{y_t\}$ process
defined in \eqref{eq:MC1y} satisfies the recursive relationship \eqref{mgarmagarch-dist}, \eqref{mgarmagarch-mean} and
\eqref{mgarmagarch-variance}.

In general, the stationarity and ergodicity of the Markov Chain $\{(X_t',Z_t')'\}$ can be implied by the {\it geometric drift condition} \citep[see Chapter~15 of][]{meyn:2009}. Such conditions depend on the conditional distribution assumption and the y-link function used. As an illustration, we next show when this condition is fulfilled for the log-Gamma-GARMA-GARCH and logit-Beta-GARMA-GARCH models.
Recall that $\phi(z)=1-\phi_1 z-\cdots -\phi_p z^p$, and $\hA$ is the $r\times r$ matrix defined in \eqref{eq:MC1}. Let $\hA_1$ be the $r\times r$ matrix whose first row is $(\alpha_1,\ldots,\alpha_r)$, and other elements are zero. Define the sequence of matrices $\{\hB_k\}$ recursively as $\hB_0=I$, and $\hB_k=\hA'\hB_{k-1}\hA+5\hA_1'\hB_{k-1}\hA_1$ for $k\geq 1$.

\begin{theorem}
\label{thm:gamma}
Consider the log-Gamma-GARMA-GARCH model \eqref{gammadist}. Assume (i) for some $h$, the operator norm of $\hB_h$ is strictly less than one; and (ii) $\phi(z)\neq 0$ for $|z|\leq 1$. Then the equation \eqref{gammadist} admits a solution $\{y_t\}$ that is strictly stationary, and satisfying $E[h(y_t)]^4<\infty$.
\end{theorem}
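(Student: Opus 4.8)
The plan is to exploit the decoupling observed in Remark~\ref{rk:gamma0}. Since the $y$-link is $\log(\cdot)$, the conditional law of $\varepsilon_t$ given $\F_{t-1}$ depends only on the shape parameter $c_t=\psi_1^{-1}(\sigma_t^2)$, hence only on $\sigma_t^2$; and by the state-space construction \eqref{eq:MC1} one has $\sigma_t^2=\sigma^2+\beta'\hA Z_{t-1}$, a deterministic function of $Z_{t-1}$ alone. Consequently the $\{Z_t\}$-block of the state vector is by itself a time-homogeneous Markov chain on $\mathbb{R}^r$, autonomous of $\{X_t\}$: from $Z_{t-1}$ one forms $\sigma_t^2$ and $c_t$, draws $\tilde y_t\sim\mathrm{Gam}(c_t,1)$, and sets $\varepsilon_t=\log\tilde y_t-\psi(c_t)$, $\zeta_t=\varepsilon_t^2-\sigma_t^2$, $Z_t=\hA Z_{t-1}+\zeta_t e_1$ (with $e_1=(1,0,\ldots,0)'\in\mathbb{R}^r$). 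The strategy is then: (a) show $\{Z_t\}$ is geometrically ergodic with a unique invariant law $\pi$ satisfying $\int\|z\|^2\,\pi(dz)<\infty$; (b) run it from $\pi$ to get a strictly stationary, ergodic martingale-difference sequence $\{\varepsilon_t\}$ with $E\varepsilon_t^4<\infty$; (c) recover $h(y_t)=\log y_t$ as the causal stationary solution of \eqref{eq:arma} driven by $\{\varepsilon_t\}$, using assumption~(ii), and set $y_t=\exp(h(y_t))$, verifying it solves \eqref{gammadist}.

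For the drift part of (a) I would use the quadratic test functions $V_k(z)=z'\hB_k z$. Using $Z_t=\hA Z_{t-1}+\zeta_t e_1$ and $E[\zeta_t\mid\F_{t-1}]=0$, the cross term drops and
\[
E\big[V_{k-1}(Z_t)\mid\F_{t-1}\big]=Z_{t-1}'\hA'\hB_{k-1}\hA\,Z_{t-1}+\big(e_1'\hB_{k-1}e_1\big)\,E\big[\zeta_t^2\mid\F_{t-1}\big].
\]
The Gamma-specific ingredient is an explicit polygamma bound for $E[\zeta_t^2\mid\F_{t-1}]=E[\varepsilon_t^4\mid\F_{t-1}]-\sigma_t^4$ by a fixed multiple of $\sigma_t^4$ (both $E[\varepsilon_t^4\mid\F_{t-1}]$ and $\sigma_t^2$ being explicit functions of $c_t$ through polygamma functions); this moment inequality is what gives rise to the constant in the recursion $\hB_k=\hA'\hB_{k-1}\hA+5\hA_1'\hB_{k-1}\hA_1$. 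One then substitutes $\sigma_t^4=(\sigma^2+\beta'\hA Z_{t-1})^2\le\text{const}+\text{const}\cdot Z_{t-1}'\hA'\beta\beta'\hA Z_{t-1}$, and uses the identity $\hA_1=e_1(\beta'\hA)$ (so that $\hA_1'\hB_{k-1}\hA_1=(e_1'\hB_{k-1}e_1)\,\hA'\beta\beta'\hA$) to absorb everything into $V_k$; discarding the lower-order terms gives $E[V_{k-1}(Z_t)\mid Z_{t-1}=z]\le V_k(z)+\text{const}$. Iterating $h$ times with $\hB_0=I$, one obtains $E[\,\|Z_{t+h}\|^2\mid Z_t=z\,]\le z'\hB_h z+\text{const}\le\|\hB_h\|\,\|z\|^2+\text{const}$, and assumption~(i) makes $\|\hB_h\|<1$, i.e.\ $V(z)=\|z\|^2$ satisfies a geometric drift condition for the $h$-step chain (which transfers to the one-step chain in the standard way, so that also $\int\|z\|^2\pi(dz)<\infty$).

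For the ergodicity part of (a) I would verify the irreducibility/small-set hypotheses of Chapter~15 of \cite{meyn:2009}: the density $\mathrm{Gam}(c_t,1)$ is strictly positive and jointly continuous in its argument and in the shape, and on any compact set of $z$-values $c_t=\psi_1^{-1}(\sigma^2+\beta'\hA z)$ ranges over a compact subset of $(0,\infty)$ (using $\sigma_t^2\ge\omega>0$); pushing the fresh randomness through the companion shift $\hA$ over $r$ successive steps then shows that the $r$-step transition kernel dominates a positive continuous density on $\mathbb{R}^r$, so the chain is $\psi$-irreducible, aperiodic, and every compact set is small. Combined with the drift condition this yields a unique $\pi$ with finite second moment and geometric ergodicity, giving (a); (b) is then immediate. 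For (c), assumption~(ii) ($\phi(z)\ne0$ for $|z|\le1$) gives the causal stationary solution $\log y_t=\mu+\sum_{j\ge0}c_j\varepsilon_{t-j}$ with geometrically decaying $c_j$; the series converges in $L^4$ by Minkowski's inequality (using $\sup_tE\varepsilon_t^4<\infty$ and $\sum_j|c_j|<\infty$), so $\{h(y_t)\}$ is strictly stationary with $E[h(y_t)]^4<\infty$, and a routine check (writing $\tilde y_t=(c_t/\eta_t)y_t$ as in Remark~\ref{rk:gamma0}) shows $y_t=\exp(h(y_t))$ satisfies \eqref{gammadist}. The main obstacle is the drift computation in (a): obtaining the sharp polygamma bound on $E[\varepsilon_t^4\mid\F_{t-1}]/\sigma_t^4$ and tracking the quadratic forms carefully enough that the recursion closes exactly onto $\hB_k$ and assumption~(i) becomes a genuine $h$-step contraction; a secondary technical point is establishing the small-set property despite the last $r-1$ coordinates of $Z_t$ evolving deterministically, which is handled by iterating the transition kernel.
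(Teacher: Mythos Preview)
Your proposal is correct and follows essentially the same route as the paper: decouple via Remark~\ref{rk:gamma0} so that $\{Z_t\}$ is an autonomous Markov chain, establish a geometric drift for the $h$-skeleton using the quadratic forms $z'\hB_k z$ together with the polygamma identity $\mathrm{Var}(\zeta_t\mid\F_{t-1})=\psi_3(c_t)-\psi_1(c_t)^2\le 5\sigma_t^4$ and the relation $\hA_1=e_1(\beta'\hA)$, then feed the resulting stationary $\{\varepsilon_t\}$ into \eqref{eq:arma} under assumption~(ii). The one point to tighten is that the recursion does not close \emph{exactly} onto $\hB_k$: expanding $\sigma_t^4=(\sigma^2+\beta'\hA z)^2$ forces a $(1+\kappa)$ slack in front of the quadratic term, so one actually obtains $E[\|Z_t\|^2\mid Z_{t-h}=z]\le(1+\kappa)^h z'\hB_h z+D_\kappa$ and must pick $\kappa$ small enough that $(1+\kappa)^h\|\hB_h\|<1$, which is precisely how the paper proceeds.
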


\begin{theorem}
\label{thm:beta}
Consider the logit-Beta-GARMA-GARCH model \eqref{betadist}.
If for some $h$, the operator norms of $\hB_h$ and $\Phi^h$ are both strictly less than one, then the equation \eqref{gammadist} admits a solution $\{y_t\}$ that is strictly stationary, and satisfying $E[h(y_t)]^4<\infty$.
\end{theorem}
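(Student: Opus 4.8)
\medskip
\noindent\textit{Proof plan.}\quad
The plan is to realize $\{y_t\}$ as a measurable coordinate of the time-homogeneous Markov chain $\{(X_t',Z_t')'\}$ on $\mathbb{R}^{p+r}$ constructed above, to show that this chain is geometrically ergodic via the drift criterion of \cite[Ch.~14--15]{meyn:2009}, and then to read the strictly stationary solution off the unique invariant law and the moment bound off the Lyapunov function. First I would settle the regularity of the chain. Because the $\mathrm{Beta}(a_t,b_t)$ density is strictly positive on $(0,1)$, the innovation $\varepsilon_t=\logit(y_t)-\mu_t$ has conditional support equal to all of $\mathbb{R}$, so the pair $(\varepsilon_t,\zeta_t)=(\varepsilon_t,\varepsilon_t^2-\sigma_t^2)$ sweeps out a full parabola as $y_t$ ranges over $(0,1)$; concatenating $p+r$ consecutive steps and using the nondegeneracy assumptions built into the state-space reduction (so that the $(p+r)$-step map from the driving innovations $\varepsilon_t,\dots,\varepsilon_{t+p+r-1}$ into $\mathbb{R}^{p+r}$ has full-rank Jacobian at generic points), the $(p+r)$-step kernel dominates a measure that is absolutely continuous on a nonempty open ball. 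This yields $\psi$-irreducibility with respect to Lebesgue measure, aperiodicity, and --- together with the obvious continuity of the kernel in the initial state --- that all compact sets are petite. I expect this regularity step to be the most delicate part of the argument, and it is exactly where the last-lag nondegeneracy hypotheses on $(\phi_p,\delta_{p-1})$ and $(\alpha_r,\beta_{r-1})$ are used.

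The analytic core is a conditional fourth-moment bound for the Beta innovations. If $Y\sim\mathrm{Beta}(a,b)$, the cumulant generating function of $\logit(Y)$ is $\log\Gamma(a+t)+\log\Gamma(b-t)-\log\Gamma(a)-\log\Gamma(b)$, so $\Var[\logit(Y)]=\psi_1(a)+\psi_1(b)$ and the fourth central moment equals $\psi_3(a)+\psi_3(b)+3\bigl(\psi_1(a)+\psi_1(b)\bigr)^2$, where $\psi_3$ is the pentagamma function. Since $\psi_3(c)=6\sum_{k\ge0}(c+k)^{-4}$, applying $\sum_n c_n^2\le(\sum_n c_n)^2$ to the nonnegative numbers $\{(a+k)^{-2}:k\ge0\}\cup\{(b+k)^{-2}:k\ge0\}$ yields $\psi_3(a)+\psi_3(b)\le 6\bigl(\psi_1(a)+\psi_1(b)\bigr)^2$, hence a uniform bound
\[
E[\varepsilon_t^4\mid\F_{t-1}]\le C_0\,\sigma_t^4,\qquad E[\zeta_t^2\mid\F_{t-1}]=E[\varepsilon_t^4\mid\F_{t-1}]-\sigma_t^4\le(C_0-1)\,\sigma_t^4,
\]
for a numerical constant $C_0$; it is the coefficient $C_0-1$ that drives the drift operator $M\mapsto\hA'M\hA+(C_0-1)\hA_1'M\hA_1$, precisely the operator iterated in the recursion defining $\hB_k$. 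The decisive feature is that this bound depends on $\F_{t-1}$ only through $\sigma_t^2=\sigma^2+\beta'\hA Z_{t-1}=\sigma^2+e_1'\hA_1 Z_{t-1}$ (a direct computation with the companion matrix $\hA$ and $\beta=(1,-\beta_1,\dots,-\beta_{r-1})'$ shows $\beta'\hA$ equals the first row of $\hA_1$; here $e_1$ is the first coordinate vector), that is, only through $Z_{t-1}$, even though the law of $\varepsilon_t$ also depends on $X_{t-1}$ through $\mu_t$. So the $Z$-block can be handled essentially as in the proof of Theorem~\ref{thm:gamma}; the genuinely new work is coupling it to the $X$-block.

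With this bound, take $V(x,z)=1+\|x\|^2+\|z\|^2$ and estimate the $h$-step kernel $P^h$. From $X_t=\Phi X_{t-1}+\varepsilon_t e_1$ and $Z_t=\hA Z_{t-1}+\zeta_t e_1$ the martingale-difference identities $E[\varepsilon_t\mid\F_{t-1}]=E[\zeta_t\mid\F_{t-1}]=0$ annihilate all cross terms, leaving $E[\|X_t\|^2\mid\F_{t-1}]=X_{t-1}'\Phi'\Phi X_{t-1}+\sigma_t^2$ and, since $(e_1'\hA_1 z)^2=z'\hA_1'\hA_1 z$ ($\hA_1$ having only a nonzero first row),
\[
E[\|Z_t\|^2\mid\F_{t-1}]\le Z_{t-1}'\hA'\hA Z_{t-1}+(C_0-1)\bigl(\sigma^2+e_1'\hA_1 Z_{t-1}\bigr)^2\le Z_{t-1}'\bigl[\hA'\hA+(C_0-1)(1+\epsilon)\hA_1'\hA_1\bigr]Z_{t-1}+C_\epsilon,
\]
using $(\sigma^2+u)^2\le(1+\epsilon)u^2+C_\epsilon$. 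Iterating the latter $h$ times reproduces the recursion for $\hB_k$ with its coefficient replaced by $(C_0-1)(1+\epsilon)$: the $Z$-quadratic part of $P^hV$ is $Z_{t-h}'\hB_h^{(\epsilon)}Z_{t-h}$ plus lower-order terms, and $\|\hB_h^{(\epsilon)}\|<1$ for $\epsilon$ small since $\|\hB_h\|<1$ by hypothesis; no quadratic in $Z$ leaks in from elsewhere because the $X$-recursion receives only $\sigma_t^2$, which is affine in $Z$. Meanwhile the $X$-quadratic part of $P^hV$ is $\|\Phi^h X_{t-h}\|^2$ plus the intervening conditional variances $\sigma_{t-j}^2$, $0\le j<h$, each affine in the corresponding $Z_{t-j-1}$ and hence, after conditional expectation, affine in $Z_{t-h}$; bounding this affine feedback by $\eta\|Z_{t-h}\|^2+C_\eta$ for arbitrarily small $\eta>0$ and collecting,
\[
(P^hV)(x,z)\le\|\Phi^h\|^2\|x\|^2+\bigl(\|\hB_h^{(\epsilon)}\|+\eta\bigr)\|z\|^2+b.
\]
Since $\|\Phi^h\|<1$ and $\|\hB_h^{(\epsilon)}\|<1$, choosing $\epsilon,\eta$ small gives $\lambda:=\max\{\|\Phi^h\|^2,\|\hB_h^{(\epsilon)}\|+\eta\}<1$ and hence $P^hV\le\lambda V+b$; this geometric drift for the $h$-skeleton upgrades to one for the one-step chain through $\widetilde V=\sum_{j=0}^{h-1}\lambda^{-j/h}P^jV$, finite because $V$ is polynomial and the conditional moments just used are finite. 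The only real subtlety here is balancing $\eta$ against the strict $Z$-contraction.

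Then \cite[Theorems~14.0.1 and~15.0.1]{meyn:2009} deliver positive Harris recurrence, geometric ergodicity, and a unique invariant law $\pi$ with $E_\pi[V]<\infty$, so $E_\pi\|X_t\|^2<\infty$ and $E_\pi\|Z_t\|^2<\infty$. Starting the chain from $\pi$ makes $\{(X_t',Z_t')'\}$ strictly stationary, and since $y_t$ is recovered from $(X_{t-1},Z_{t-1})$ by solving $(a_t,b_t)$ from $\mu_t=\mu+\delta'\Phi X_{t-1}$ and $\sigma_t^2=\sigma^2+\beta'\hA Z_{t-1}$ and drawing $y_t\sim\mathrm{Beta}(a_t,b_t)$ with independent randomness, $\{y_t\}$ is strictly stationary and solves \eqref{betadist}. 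Finally, $E_\pi\|Z_t\|^2<\infty$ gives $E\sigma_t^4<\infty$ and thus $E\varepsilon_t^4\le C_0\,E\sigma_t^4<\infty$; since $\|\Phi^h\|<1$ is equivalent to $\phi(z)\neq0$ for $|z|\le1$, the relation $\phi(L)\logit(y_t)=\phi_0+\delta(L)\varepsilon_t$ admits the causal expansion $\logit(y_t)=\mu+\sum_{k\ge0}\psi_k\varepsilon_{t-k}$ with geometrically decaying $\psi_k$, so Minkowski's inequality gives $\|\logit(y_t)\|_{L^4}\le|\mu|+\bigl(\sum_k|\psi_k|\bigr)(E\varepsilon_t^4)^{1/4}<\infty$, i.e.\ $E[h(y_t)]^4<\infty$.
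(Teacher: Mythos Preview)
Your approach is essentially the paper's own: both embed $\{y_t\}$ in the joint chain $W_t=(X_t',Z_t')'$, bound $\Var(\zeta_t\mid W_{t-1})$ via the polygamma series inequality $\psi_3\le 6\psi_1^2$, iterate the resulting quadratic one-step drift $h$ times so that the $Z$-block is governed by the operator whose $h$-th iterate is $\hB_h$ and the $X$-block by $\Phi^h$, and then invoke the geometric drift criterion (Lemma~\ref{thm:regular}). Your explicit separation of the $X$- and $Z$-blocks, with the $\eta$-trick to absorb the affine feedback $\sigma_{t-j}^2$ from $Z$ into $X$, is only a cosmetic reorganization of the paper's combined inequality $v_j\Var(\varepsilon_{t-j}\mid W_{t-j-1})+w_j\Var(\zeta_{t-j}\mid W_{t-j-1})\le 5w_j(1+\kappa)(\beta'\hA Z_{t-j-1})^2+D_\kappa$.

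One numerical caveat you should not gloss over: your own bound yields $E[\varepsilon_t^4\mid\F_{t-1}]\le 9\sigma_t^4$, hence $C_0-1=8$, whereas the recursion defining $\hB_k$ in the paper uses the coefficient $5$. You assert these agree (``precisely the operator iterated in the recursion defining $\hB_k$'') but do not check. The paper obtains $5$ from the formula $\Var(\zeta_t\mid W_{t-1})=\psi_3(a_t)-[\psi_1(a_t)]^2+\psi_3(b_t)-[\psi_1(b_t)]^2$; the standard cumulant identity $\Var(\varepsilon^2)=\kappa_4+2\kappa_2^2$ instead gives $\psi_3(a_t)+\psi_3(b_t)+2\sigma_t^4$, which is your route. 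Whichever formula is correct, the hypothesis $\|\hB_h\|<1$ with coefficient $5$ does not automatically give the same with coefficient $8$, so you must either sharpen your moment bound to land exactly on $5$ or state the result with $\hB_k$ redefined using your $C_0-1$. The architecture of the argument is sound and matches the paper's; only this constant needs to be reconciled.
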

Note that in Theorem~\ref{thm:beta}, the condition that the operator norm of $\Phi^h$ is strictly less than one for some $h$ entails that $\phi(z)\neq 0$ for $|z|\leq 1$.
The proofs of the theorems are in Appendix.

\newpage
\section{Parameter Estimation}\label{MLE}

Let $\theta$ be the fixed and unknown parameter vector containing all model parameters. We partition the parameter vector into three sub-vectors: $\theta_{\mathrm{arma}}$ includes all the parameters in the ARMA process \eqref{mgarmagarch-mean}, $\theta_{\mathrm{garch}}$ includes all the parameters in the GARCH process \eqref{mgarmagarch-variance}, and $\varphi$ includes all time invariant parameters in the conditional distribution \eqref{mgarmagarch-dist}. Specifically, $\theta=(\theta'_{\mathrm{arma}},\theta'_{\mathrm{garch}},\varphi')'$, where
\[
    \theta_{\mathrm{arma}}=(\phi_0,\phi_1,\ldots,\phi_p,\delta_1,\ldots,\delta_q), \quad \theta_{\mathrm{garch}}=(\omega,\alpha_1,\ldots,\alpha_r,\beta_1,\ldots,\beta_s).
\]
We further set the initial values $\F_0=\by_0=(y_0,\ldots,y_{1-m})$, $\bfvarepsilon_0=(\varepsilon_0,\varepsilon_{-1},\ldots,\varepsilon_{1-m})$, $\bfvarepsilon_0^2=(\varepsilon_0^2,\ldots,\varepsilon_{1-m}^2)$ and $\bfsigma_0^2=(\sigma_0^2,\ldots,\sigma_{1-m}^2)$, where $m=\max(p,q,r,s)$.

Suppose the available data set is $\{y_{1-m},\ldots,y_0,y_1,\ldots,y_T\}$. We next introduce two approaches for parameter estimation. The first is based on the likelihood, and the second is based on quasi Gaussian likelihood with additional conditional likelihood
estimation.

\subsection{Maximum likelihood estimation}

The log-likelihood function conditional on the initial values ($\by_0$, $\bfvarepsilon_0$, $\bfvarepsilon_0^2$, and $\bfsigma_0^2$) is
\begin{align}\label{loglik0}
    L_T(\theta)=\sum_{t=1}^T\ell_t(\theta)=\sum_{t=1}^T\log f(y_t \mid \gamma_t,\varphi),
\end{align}
where $\gamma_t$ for $t=1,\ldots,T$ can be solved by the system consisting of the link and variance functions given by \eqref{mreg-linkfunc} and \eqref{mreg-varfunc}.

Given a set of parameters $\theta$, $\mu_t=g_\varphi(\gamma_t)$ and $\sigma_t^2=V_\varphi(\gamma_t)$ can be recursively obtained for $t=1,\ldots,T$. Recursively solving the system consisting of mean and variance link functions  yields $\gamma_t$ for $t=1,\ldots, T$. By plugging $\gamma_t$ into (\ref{loglik0}), we can evaluate the log-likelihood function $L_t(\theta)$. The maximum likelihood estimate is then obtained by maximizing (\ref{loglik0}), using nonlinear optimization procedures.

In practice, one can set the initial values $\varepsilon_0=\ldots=\varepsilon_{1-m}$ to zero and $\varepsilon_0^2=\cdots=\varepsilon_{1-m}^2=\sigma_0^2=\ldots=\sigma_{1-m}^2$ to the unconditional variance to reduce the complexity, a common practice in time series estimation. The resulting estimate $\hat\theta$ is the conditional maximum likelihood estimate. One can also use the quasi Gaussian MLE for the ARMA and GARCH parameters and its corresponding ML estimate for $\varphi$ (see the next subsection) as the initial values for this procedure.

The theory of \cite{HH1980} can be applied to study the asymptotic distribution of the MLE. However, for concrete GARMA-GARCH models, sufficient conditions that ensure the asymptotic normality are still under investigation. Regardless of the technical issues, we provide a reasonable formula for the asymptotic covariance matrix. Let
\begin{equation*}
  u_t(\theta) = \frac{\partial \log f(y_t|\gamma_t,\varphi)}{\partial\theta}.
\end{equation*}
Let $\theta_0$ be the true parameter. Under regularity conditions, $\{u_t(\theta_0)\}$ is an MDS with respect to $\{\F_t\}$. Define
\begin{equation*}
  I_T(\theta_0) = \sum_{t=1}^T E_{\theta_0}[u_t(\theta_0)(u_t(\theta_0))'|\F_{t-1}].
\end{equation*}
Under regularity conditions, it holds that
\begin{equation*}
  [I_T(\theta_0)]^{-1/2}(\hat\theta-\theta_0) \Rightarrow N(0,I).
\end{equation*}

For some GARMA-GARCH models presented in this study, for a given $\theta_0$, the conditional information $E_{\theta_0}[u_t(\theta_0)(u_t(\theta_0))'|\F_{t-1}]$ can be calculated explicitly. By substituting the estimate $\hat\theta$, we get an estimate $I_T(\hat\theta)$ of the Fisher information. In practice, the standard errors of the estimates can also be obtained by evaluating the Hessian matrix of the log-likelihood function (\ref{loglik0}) at the MLE. Our empirical experiences have shown that this estimator works very well.

\subsection{Gaussian ML estimation}

Just based on \eqref{mgarmagarch-mean} and \eqref{mgarmagarch-variance}, one can use quasi Gaussian likelihood to estimate the ARMA and GARCH parameters quickly. This estimate will be referred to as quasi Gaussian maximum likelihood estimator (GMLE).
Let $\vartheta=(\theta'_{\mathrm{arma}},\theta'_{\mathrm{garch}})'$. The GMLE for the GARMA-GARCH model is equivalent to a minimizer of
\begin{align}
    Q_T(\vartheta)=\frac{1}{T}\sum_{t=1}^Tq_t=\frac{1}{T}\sum_{t=1}^T\left(\log\sigma_t^2(\vartheta)+\frac{\varepsilon_t^2(\vartheta)}{\sigma_t^2(\vartheta)}\right),
\end{align}
where $\varepsilon_t(\vartheta)$ and $\sigma_t^2(\vartheta)$ are defined by \eqref{mgarmagarch-mean} and \eqref{mgarmagarch-variance}. This objective function can be evaluated relatively cheaply, and many algorithms are available for finding its minima.

The asymptotic properties such as consistency and asymptotic normality (CAN) of the GMLE can be obtained based on the existing studies. For example many studies such as \cite{Lumsdaine1996}, \cite{BHK2003}, \cite{Ling2007} and \cite{FZ2004,FZ2010,FZ2012} have investigated the CAN of the GMLE when the GARCH process of $\varepsilon_t$ in our model is strong with i.i.d. innovations. \cite{LH1994} studied the CAN of the GMLE for strictly stationary semi-strong GARCH(1,1) model and \cite{Escanciano2009} further proved the CAN of the GMLE for general GARCH($p,q$) model under martingale assumptions.

The general CAN theory of the GARMA-GARCH model is very challenging to establish, which we leave to the future work. Here we only discuss how the result in \cite{Escanciano2009} can be adapted to give the asymptotic normality of the GMLE for the two examples given in Section~\ref{GARMA-GARCH}.

For the log-Gamma-GARMA-GARCH or the logit-Beta-GARMA-GARCH model, we assume the conditions of Theorem~\ref{thm:gamma} or Theorem~\ref{thm:beta} hold, respectively. Assume in addition that (i) the parameter space $\Theta$ of $\vartheta$ is compact and $\vartheta$ belongs to the interior of $\Theta$; (ii) $\phi(z)$ and $\delta(z)$ have no common roots, and $\delta(z)\neq 0$ when $|z|\leq 1$; (iii) $\alpha(z)$ and $\beta(z)$ have no common roots, $\beta(z)\neq 0$ when $|z|\leq 1$, and $\alpha(1)\neq 0$, $\alpha_r+\beta_s>0$; and (iv) $E|\varepsilon_t|^{4+\delta}<\infty$ for some $\delta>0$. Under these regularity conditions, it holds that
\begin{align}
    \sqrt{T}(\bar\vartheta-\vartheta_0)\Rightarrow N(0,I(\vartheta_0)^{-1}),
\end{align}
where $I(\vartheta_0)$ denotes the information matrix, which is defined as minus the expected Hessian, i.e., $I(\vartheta_0)=-E\left(\frac{\partial^2q_t(\vartheta)}{\partial\vartheta\partial\vartheta'}|_{\vartheta=\vartheta_0}\right)$

The GML estimation procedure does not provide an estimator for the invariant parameter
$\varphi$. However, given the GML estimate $\bar\vartheta$, we can obtain the estimated
ARMA residual $\hat{\varepsilon}_t$ and the estimated conditional variance $\hat{\sigma}^2_t$. Set $\hat{\mu}_t=h(y_t)-\hat{\varepsilon}_t$. Given $\varphi$, and with $\hat{\mu}_t$ and $\hat{\sigma}_t^2$, we can obtain $\hat{\gamma}_t(\varphi)$ by solving the system of equations $g_{\varphi}(\gamma_t)=\hat{\mu}_t$ and $V_\varphi(\gamma_t)=\hat{\sigma}_t^2$. We emphasize that
$\hat{\gamma}_t$ depends on $\varphi$. Now define the log-likelihood function of $\varphi$ given fixed
$\bar\vartheta$ as
\[
\bar L_T(\varphi)=\sum_{t=1}^T\log f(y_t|\hat{\gamma}_t(\varphi),\varphi).
\]
The maximizer of $L_T(\varphi)$ can be treated as the pseudo-ML estimator of $\varphi$.

\begin{rem}[Model identification and model checking]
Both AIC and BIC can be used for model/order selection. Standard time series analysis tools such as ACF, PACF and EACF \citep{TT1984,CMC2013} can be used to obtain the preliminary lag orders for the ARMA process. Empirical studies have shown that
$r=s=1$ are often sufficient for the GARCH process in practice and can be used as the starting point of the model building process. For model checking, the portmanteau Q-statistic of the residuals and the squared standardized residuals can be used, as a standard exercise for time series model checking, especially for time series with potential conditional heteroskedasticity, along with other residual analysis tools.
\end{rem}

\section{Simulation Examples}\label{simulation}

In this section, we investigate finite sample performances of the preceding two estimators (named as GMLE and MLE) under the log-Gamma-GARMA-GARCH and logit-Beta-GARMA-GARCH models introduced in Section \ref{GARMA-GARCH}. We also present empirical evidence regarding the impact of model mis-specification when ignoring the GARCH process in the model. Without the GARCH part, the models
become the log-Gamma-M-GARMA and logit-Beta-M-GARMA models in \cite{ZXC2015}.

All estimates are obtained through a constraint optimization procedure that uses the MaxSQP algorithm, implementing a sequential quadratic programming technique, see \cite{NW1999}. We also use the solver for systems of nonlinear equations (SolveNLE) in OxMetrics software (see Doornik, 2007) to recursively obtain the solution of $\gamma_t$.

\subsection{Log-Gamma-GARMA(1,1)-GARCH(1,1) model}

\begin{table}[!t]
\renewcommand{\arraystretch}{1.00}
\small \tabcolsep 3.7mm
\caption{Simulation results of the log-Gamma-GARMA-GARCH model}\label{table1}
\begin{center}
\begin{tabular}{ccccccccccccc}
\hline\hline
           &          &  \multicolumn{2}{c}{Log-Gamma-M-GARMA}  &&\multicolumn{2}{c}{Log-Gamma-GARMA-GARCH}\\ \cline{3-4}\cline{6-7}
 Parameter &     True &               GMLE &                MLE &&               GMLE &                MLE \\ \hline
           &                                  \multicolumn{6}{c}{$T=100$}                                \\
  $\phi_0$ &  ~~~0.00 & $-$0.0006 (0.1189) & ~~~0.0025 (0.0791) && ~~~0.0052 (0.0052) & ~~~0.0033 (0.0758) \\
  $\phi_1$ &  ~~~0.95 & ~~~0.8551 (0.1741) & ~~~0.8635 (0.1293) && ~~~0.8642 (0.1272) & ~~~0.8710 (0.1230) \\
$\delta_1$ &  $-$0.65 & $-$0.5768 (0.1941) & $-$0.5873 (0.1618) && $-$0.5850 (0.1640) & $-$0.5886 (0.1555) \\
       $c$ &          & ~~~2.8333 (0.8293) & ~~~2.8497 (0.8247) &&                 -- &                 -- \\
  $\omega$ &  ~~~0.02 &                 -- &                 -- && ~~~0.0882 (0.1044) & ~~~0.0830 (0.1045) \\
$\alpha_1$ &  ~~~0.06 &                 -- &                 -- && ~~~0.0712 (0.1002) & ~~~0.0638 (0.0848) \\
 $\beta_1$ &  ~~~0.90 &                 -- &                 -- && ~~~0.7227 (0.2927) & ~~~0.7401 (0.2913) \\\hline
           &                                  \multicolumn{6}{c}{$T=500$}                                \\
  $\phi_0$ &  ~~~0.00 & ~~~0.0007 (0.0150) & $-$0.0003 (0.0181) && ~~~0.0011 (0.0144) & ~~~0.0003 (0.0139) \\
  $\phi_1$ &  ~~~0.95 & ~~~0.9369 (0.0238) & ~~~0.9259 (0.0274) && ~~~0.9374 (0.0245) & ~~~0.9379 (0.0214) \\
$\delta_1$ &  $-$0.65 & $-$0.6396 (0.0538) & $-$0.6371 (0.0506) && $-$0.6395 (0.0503) & $-$0.6393 (0.0436) \\
       $c$ &          & ~~~2.5598 (0.3815) & ~~~2.5706 (0.3733) &&                 -- &                 -- \\
  $\omega$ &  ~~~0.02 &                 -- &                 -- && ~~~0.0417 (0.0606) & ~~~0.0376 (0.0561) \\
$\alpha_1$ &  ~~~0.06 &                 -- &                 -- && ~~~0.0655 (0.0357) & ~~~0.0623 (0.0310) \\
 $\beta_1$ &  ~~~0.90 &                 -- &                 -- && ~~~0.8458 (0.1538) & ~~~0.8574 (0.1424) \\\hline
           &                                  \multicolumn{6}{c}{$T=2000$}                                \\
  $\phi_0$ &  ~~~0.00 & ~~~0.0003 (0.0060) & ~~~0.0000 (0.0076) && ~~~0.0003 (0.0056) & ~~~0.0002 (0.0057) \\
  $\phi_1$ &  ~~~0.95 & ~~~0.9467 (0.0111) & ~~~0.9331 (0.0135) && ~~~0.9468 (0.0105) & ~~~0.9467 (0.0095) \\
$\delta_1$ &  $-$0.65 & $-$0.6473 (0.0255) & $-$0.6436 (0.0246) && $-$0.6471 (0.0105) & $-$0.6464 (0.0110) \\
       $c$ &          & ~~~2.4904 (0.2099) & ~~~2.4987 (0.2029) &&                 -- &                 -- \\
  $\omega$ &  ~~~0.02 &                 -- &                 -- && ~~~0.0228 (0.0103) & ~~~0.0220 (0.0082) \\
$\alpha_1$ &  ~~~0.06 &                 -- &                 -- && ~~~0.0613 (0.0169) & ~~~0.0602 (0.0135) \\
 $\beta_1$ &  ~~~0.90 &                 -- &                 -- && ~~~0.8923 (0.0328) & ~~~0.8952 (0.0259) \\\hline
\end{tabular}
\end{center}
{\it Note:} For each cell, the statistics given are based on 500 simulated samples. The mean and root mean squared error (in parentheses) for each estimator are shown.
\end{table}

We consider a log-Gamma-GARMA-GARCH model with lag orders $p=q=r=s=1$. Specifically,
\[
    y_t\sim \hbox{Gam}(c_t,c_t/\eta_t),\quad \log y_t=\phi_0+\phi_1\log y_{t-1}+\varepsilon_t+\delta_1\varepsilon_{t-1},\quad \sigma^2_t=\omega+\alpha_1\varepsilon^2_{t-1}+\beta_1\sigma^2_{t-1},
\]
with $\varepsilon_t=\log y_{t}-\log\eta_t-\psi(c_t)+\log c_t$ and $\sigma_t^2=\psi_1(c_t)$. The specific true parameter values are assigned as $\phi_0=0$, $\phi_1=0.95$, $\delta_1=-0.65$, $\omega=0.02$, $\alpha_1=0.06$, and $\beta_1=0.90$. We consider different sample sizes $T=100$, $500$ or $2000$. Each simulation is repeated for 500 times. The mean and standard errors of the estimates are presented in Table \ref{table1}.

Several observations can be {drawn} from Table \ref{table1}. First, under the true GARMA-GARCH model, both GMLE and MLE perform well, especially when the sample size is large. This result is  consistent with the theoretical result of \cite{LH1994} and \cite{Escanciano2009} that the GMLE is consistent under relatively weak conditions.
GMLE performs slightly worse than the MLE when sample size is large, with slightly larger biases and root mean squared errors (in parentheses) than that of the MLE. This is possibly due to the loss of efficiency when using GMLE.
Nevertheless, the GMLE can serve as good starting values for ML estimation.
Second, under the mis-specified log-Gamma-M-GARMA model,
both GMLE and MLE produce relative accurate estimates for the ARMA parameters, with slightly larger variances than those under the true model.
The parameter $c$ under log-Gamma-M-GARMA model is the extra time invariant shape parameter in the conditional Gamma distribution, as mentioned in Section 2.2.

\subsection{Logit-Beta-GARMA(1,1)-GARCH(1,1) model}

\begin{table}[!b]
\renewcommand{\arraystretch}{1.00}
\small \tabcolsep 3.7mm
\caption{Simulation results of the logit-Beta-GARMA-GARCH model}\label{table2}
\begin{center}
\begin{tabular}{ccccccccccccc}
\hline\hline
           &          &  \multicolumn{2}{c}{Logit-Beta-M-GARMA} &&\multicolumn{2}{c}{Logit-Beta-GARMA-GARCH}\\ \cline{3-4}\cline{6-7}
 Parameter &     True &              GMLE  &               MLE  &&              GMLE  &               MLE  \\ \hline
           &                                  \multicolumn{6}{c}{$T=100$}                                \\
  $\phi_0$ &  $-$0.10 & $-$0.1790 (0.1466) & $-$0.1758 (0.1397) && $-$0.1531 (0.1080) & $-$0.1531 (0.1071) \\
  $\phi_1$ &  ~~~0.90 & ~~~0.8203 (0.1473) & ~~~0.8237 (0.1403) && ~~~0.8459 (0.1090) & ~~~0.8460 (0.1079) \\
$\delta_1$ &  $-$0.50 & $-$0.4363 (0.2216) & $-$0.4380 (0.2178) && $-$0.4601 (0.1572) & $-$0.4586 (0.1578) \\
    $\tau$ &          & ~~~90.418 (42.408) & ~~~90.476 (42.386) &&                 -- &                 -- \\
  $\omega$ &  ~~~0.01 &                 -- &                 -- && ~~~0.0135 (0.0092) & ~~~0.0136 (0.0092) \\
$\alpha_1$ &  ~~~0.45 &                 -- &                 -- && ~~~0.3940 (0.1796) & ~~~0.3940 (0.1791) \\
 $\beta_1$ &  ~~~0.45 &                 -- &                 -- && ~~~0.3999 (0.2311) & ~~~0.3964 (0.2306) \\\hline
           &                                  \multicolumn{6}{c}{$T=500$}                                \\
  $\phi_0$ &  $-$0.10 & $-$0.1154 (0.0497) & $-$0.1142 (0.0524) && $-$0.1085 (0.0261) & $-$0.1084 (0.0261) \\
  $\phi_1$ &  ~~~0.90 & ~~~0.8844 (0.0498) & ~~~0.8856 (0.0524) && ~~~0.8914 (0.0262) & ~~~0.8915 (0.0262) \\
$\delta_1$ &  $-$0.50 & $-$0.4922 (0.1042) & $-$0.4906 (0.4906) && $-$0.4908 (0.0545) & $-$0.4906 (0.0542) \\
    $\tau$ &          & ~~~69.287 (19.761) & ~~~69.375 (19.658) &&                 -- &                 -- \\
  $\omega$ &  ~~~0.01 &                 -- &                 -- && ~~~0.0106 (0.0032) & ~~~0.0106 (0.0033) \\
$\alpha_1$ &  ~~~0.45 &                 -- &                 -- && ~~~0.4330 (0.0798) & ~~~0.4341 (0.0801) \\
 $\beta_1$ &  ~~~0.45 &                 -- &                 -- && ~~~0.4450 (0.0836) & ~~~0.4440 (0.0839) \\\hline
           &                                  \multicolumn{6}{c}{$T=2000$}                                \\
  $\phi_0$ &  $-$0.10 & $-$0.1068 (0.0280) & $-$0.1047 (0.0240) && $-$0.1017 (0.0111) & $-$0.1017 (0.0110) \\
  $\phi_1$ &  ~~~0.90 & ~~~0.8932 (0.0281) & ~~~0.8953 (0.0238) && ~~~0.8983 (0.0109) & ~~~0.8984 (0.0108) \\
$\delta_1$ &  $-$0.50 & $-$0.4938 (0.0648) & $-$0.4948 (0.0563) && $-$0.4983 (0.0254) & $-$0.4981 (0.0251) \\
    $\tau$ &          & ~~~64.576 (11.430) & ~~~64.612 (11.441) &&                 -- &                 -- \\
  $\omega$ &  ~~~0.01 &                 -- &                 -- && ~~~0.0101 (0.0014) & ~~~0.0101 (0.0014) \\
$\alpha_1$ &  ~~~0.45 &                 -- &                 -- && ~~~0.4478 (0.0434) & ~~~0.4480 (0.4480) \\
 $\beta_1$ &  ~~~0.45 &                 -- &                 -- && ~~~0.4477 (0.0388) & ~~~0.4476 (0.0387) \\\hline
\end{tabular}
\end{center}
{\it Note:} For each cell, the statistics given are based on 500 simulated samples. The mean and root mean squared error (in parentheses) for each estimator are shown.
\end{table}

In this example, we simulate the time series of proportions from the following logit-Beta-GARMA(1,1)-GARCH(1,1) model
\[
    y_t\sim \hbox{Beta}(a_t,b_t),\quad \logit(y_t)=\phi_0+\phi_1\logit(y_{t-1})+\varepsilon_t+\delta_1\varepsilon_{t-1},\quad
    \sigma^2_t=\omega+\alpha_1\varepsilon^2_{t-1}+\beta_1\sigma^2_{t-1},
\]
where $\varepsilon_t=\logit(y_{t})-\psi(a_t)+\psi(b_t)$ and $\sigma_t^2=\psi_1(a_t)+\psi_1(b_t)$. Again we consider three sample sizes $T=100$, $T=500$ and $T=2000$. The true parameter values are set as $\phi_0=-0.10$, $\phi_1=0.90$, $\delta_1=-0.50$, $\omega=0.01$, $\alpha_1=0.45$, and $\beta_1=0.45$. We carry out 500 repeated experiments for each simulation.

Table \ref{table2} reports the empirical performance of the two estimators.
It is seen that, under the true model, the GMLE and MLE perform very similarly in both small and large samples. This implies that the GMLE can be used as a surrogate of the MLE for this model. On the other hand, when the model is mis-specified as the logit-Beta-M-GARMA model, the corresponding GMLE and MLE of the ARMA parameters
have larger biases and root mean squared errors (in parentheses) than the GMLE and MLE under the true logit-Beta-GARMA-GARCH model. The parameter $\tau$ under the logit-Beta-M-GARMA model is the extra time invariant parameter in the conditional Beta distribution.

\section{Applications}

\subsection{High-frequency realized volatility}\label   {empiricalapp1}

\begin{table}[!b]
\renewcommand{\arraystretch}{1.20}
\small \tabcolsep 5.0mm
\caption{Estimation results of the log-Gamma-GARMA-GARCH model}\label{table3}
\begin{center}
\begin{tabular}{ccccccccccccc}
\hline\hline
           &   \multicolumn{2}{c}{Log-Gamma-M-GARMA}   && \multicolumn{2}{c}{Log-Gamma-GARMA-GARCH} \\ \cline{2-3}\cline{5-6}
 Parameter &                GMLE &                 MLE &&                GMLE &                 MLE \\ \hline
  $\phi_0$ &  $-$0.0582 (0.0198) &  $-$0.0609 (0.0221) &&  $-$0.0623 (0.0179) &  $-$0.0816 (0.0228) \\
  $\phi_1$ &  ~~~0.9552 (0.0115) &  ~~~0.9532 (0.0120) &&  ~~~0.9533 (0.0108) &  ~~~0.9422 (0.0126) \\
$\delta_1$ &  $-$0.4067 (0.0386) &  $-$0.3517 (0.0371) &&  $-$0.4315 (0.0394) &  $-$0.3829 (0.0424) \\
       $c$ &  ~~~2.4924 (0.1013) &  ~~~2.4992 (0.1126) &&                 --  &                 --  \\
  $\omega$ &                 --  &                 --  &&  ~~~0.0433 (0.0228) &  ~~~0.0534 (0.0237) \\
$\alpha_1$ &                 --  &                 --  &&  ~~~0.0583 (0.0224) &  ~~~0.1093 (0.0362) \\
 $\beta_1$ &                 --  &                 --  &&  ~~~0.8311 (0.0718) &  ~~~0.8053 (0.0651) \\ \hline
    Loglik &              258.79 &              260.11 &&              191.63 &              272.59 \\
    AIC    &           $-$0.5303 &	         $-$0.5330 &&           $-$0.3863 &           $-$0.5548 \\
    BIC    &           $-$0.5100 &	         $-$0.5127 &&           $-$0.3559 &           $-$0.5244 \\
    RSS    &              3034.3 &              3026.4 &&              3029.5 &              2919.7 \\
   JB-test &        30.58$^{**}$ &        26.31$^{**}$ &&        34.78$^{**}$ &        30.00$^{**}$ \\
   $Q$(1)  &              0.7653 &              0.3027 &&              1.1873 &              0.0007 \\
   $Q$(5)  &              4.6380 &              6.2492 &&              6.7898 &              4.9045 \\
   $Q$(22) &              28.301 &              29.504 &&              26.229 &              28.331 \\
  $Q^2$(1) &        4.148$^{**}$ &              2.8142 &&              0.2282 &              0.2198 \\
  $Q^2$(5) &        17.84$^{**}$ &        14.359$^{*}$ &&              2.9546 &              2.9060 \\
 $Q^2$(22) &        52.18$^{**}$ &        50.35$^{**}$ &&              9.5137 &              13.797 \\ \hline
\end{tabular}
\end{center}
{\it Note:} ``$**$'' and ``$*$'' indicate that the test statistic is significant at 1\% and 5\% levels, respectively. The standard deviation
errors of parameter estimates are reported in parentheses.
\end{table}

In this application, we employ the proposed log-Gamma-GARMA-GARCH model \eqref{gammadist} to capture the time-varying volatility phenomenon of the realized volatility. Realized volatility has been extensively modeled and studied in financial econometrics, see for example \cite{BNS2002}, \cite{HL2005}, \cite{TOW2009}, and \cite{ZS2014}. In particular, a recent study related to this paper by \cite{CMPP2008} showed that allowing for time-varying volatility of the realized volatility and logarithmic realized variance substantially improves model fitting as well as predictive performance. We use the 5-min daily realized volatility (RV5m, computed by the sum of squared 5-minute log returns) of Standard \& Poor 500 Index (SP500), taken from the ``Oxford-Man Institute's realized library'' (version 0.3, available at the website: {\it http://realized.oxford-man.ox.ac.uk}). The data is sampled from January 3, 2017 to June 30, 2020 with $873$ observations. Based on the extended autocorrelation function \citep{TT1984,CMC2013}, the order is selected as $p=1$ and $q=1$ for the log-Gamma-M-GARMA process. We then set $p=q=r=s=1$ for the log-Gamma-GARMA-GARCH model.

Table \ref{table3} shows the estimation performance of different estimators, including the GMLE and MLE of the log-Gamma-M-GARMA and log-Gamma-GARMA-GARCH models, respectively. Again, the shape parameter $c$ is the fixed extra parameter for the log-Gamma-M-GARMA model.
In the top panel of Table~\ref{table1} we report the parameter estimates and their standard errors. In the bottom panel we report a few statistics for model validation. The first three rows provide the maximum log likelihood, AIC and BIC, respectively. In the fourth row, RSS stands for root residual sum of squares defined by $\hbox{RSS} =\sum_{t=1}^T(y_t-\hat\mu_t)^2$. The fifth row reports the test statistic of \cite{JB1987} for normality. The quantity $Q(m)$ denotes the Box-Ljung test statistic with $m$ lags \citep{LB1978}. The statistic $Q^2(m)$ is the portmanteau test statistic based on squared standardized residuals $\hat e_t^2$, which are defined as $\hat e_t^2=\hat\varepsilon_t^2/\hat\sigma_t^2$, where $\hat\sigma_t^2=\psi_1(\hat{c}_t)$. This statistic is used to test whether the conditional heteroskedasticity has been accommodated by the model \citep{ML1983}.

\begin{figure}[!b]
\begin{center}
    \includegraphics[width=1\textwidth]{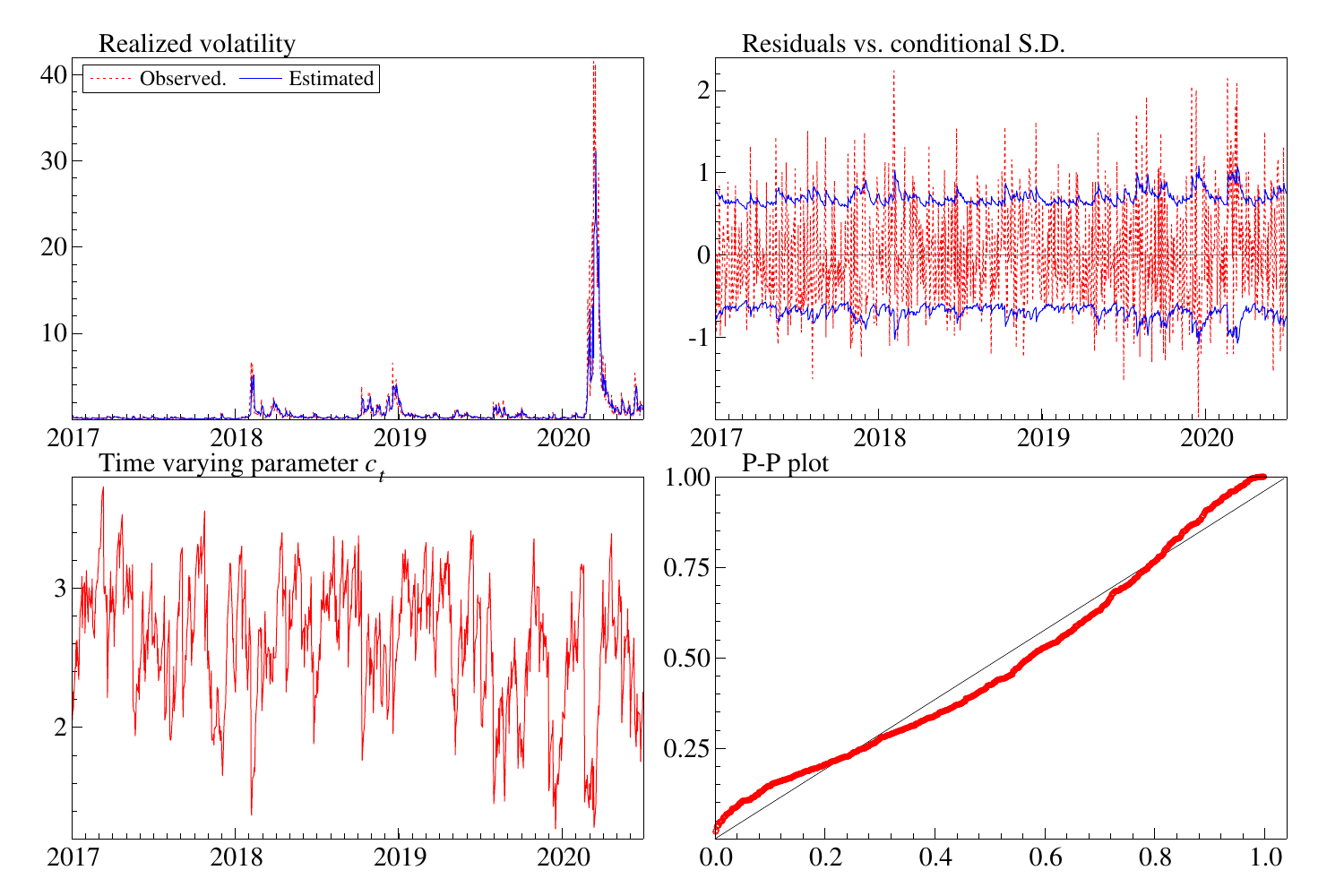}
\caption{Modeling S\&P 500 realized volatility using GARMA-GARCH model.}\label{fig:log-Gamma-GARMA-GARCH}
\end{center}
\end{figure}

The parameter estimates from the GMLE and MLE seem to be close under both models, indicating that the GMLE is a good estimator. However, comparing the left and right panels of Table \ref{table3}, we see that including the GARCH process is indeed appropriate since the coefficients $\alpha_1$ and $\beta_1$ are significant, and the resulting AIC and BIC values are smaller.

Based on the statistics $Q(m)$ and $Q^2(m)$ for various $m$, it can be seen that the log-Gamma-GARMA-GARCH model is more suitable for the data than the log-Gamma-M-GARMA model
since it captures the conditional heteroskedasticity in the process adequately, while the latter one fails to do so. Moreover, the values of log-likelihood function and RSS show that including the GARCH process improves the in-sample fitting performance greatly.
In addition, the normality test shows that the residuals from the GMLE do not follow a normal distribution, indicating that the Gaussian assumption for the innovations is not suitable here.

Figure \ref{fig:log-Gamma-GARMA-GARCH} shows some features of the estimated log-Gamma-GARMA(1,1)-GARCH(1,1) model. The upper-left panel is a plot of the original time series $y_t$ and the fitted values $\hat{y}_t$, showing a good fit to the data. The upper-right panel compares the estimated residuals $\hat\varepsilon_t$ with conditional standard deviation $\hat\sigma_t$, further showing a good description of the conditional heteroskedasticity in the process. The lower left panel presents the estimated time-varying parameter $\hat{c}_t=\psi^{-1}_1(\hat{\sigma}_t^2)$, indicating that this shape parameter varies between 1 and 4, mostly around 2 and 3. We also construct a P-P plot to check the conditional distribution assumption of the model. Specifically, if the distribution assumption is valid, then $\nu_t=F(y_t\mid \gamma_t)$ should follow the uniform distribution, where $F$ is the cumulative distribution function. The lower right panel plots the quantiles of $\hat{\nu}_t=F(y_t\mid \hat{\gamma}_t)$ versus the quantiles of the uniform distribution. It suggests that the conditional Gamma assumption is reasonable.

\subsection{U.S. personal saving rate}

\begin{table}[!t]
\renewcommand{\arraystretch}{1.20}
\small \tabcolsep 5.0mm
\caption{Estimation results of the logit-Beta-GARMA-GARCH model}\label{table4}
\begin{center}
\begin{tabular}{ccccccccccccc}
\hline\hline
           &   \multicolumn{2}{c}{Logit-Beta-M-GARMA}  && \multicolumn{2}{c}{Logit-Beta-GARMA-GARCH}\\ \cline{2-3}\cline{5-6}
 Parameter &                GMLE &                 MLE &&                GMLE &                 MLE \\ \hline
  $\phi_0$ &  $-$0.0140 (0.0119) &  $-$0.0123 (0.0116) &&  ~~~0.0049 (0.0078) &  ~~~0.0047 (0.0079) \\
  $\phi_1$ &  ~~~0.6148 (0.0521) &  ~~~0.6275 (0.0527) &&  ~~~0.6027 (0.0590) &  ~~~0.5979 (0.0586) \\
  $\phi_2$ &     0.0775 (0.0610) &  ~~~0.0721 (0.0597) &&  ~~~0.1956 (0.0705) &  ~~~0.2019 (0.0718) \\
  $\phi_3$ &  $-$0.0191 (0.0611) &  $-$0.0260 (0.0607) &&  $-$0.0031 (0.0601) &  $-$0.0054 (0.0612) \\
  $\phi_4$ &  ~~~0.1362 (0.0610) &  ~~~0.1432 (0.0607) &&  ~~~0.1163 (0.0484) &  ~~~0.1135 (0.0488) \\
  $\phi_5$ &  ~~~0.1460 (0.0521) &  ~~~0.1425 (0.0458) &&  ~~~0.0567 (0.0381) &  ~~~0.0547 (0.0379) \\
    $\tau$ &  ~~~121.56 (10.486) &  ~~~121.59 (5.0038) &&                 --  &                 --  \\
  $\omega$ &                 --  &                 --  &&  ~~~0.0063 (0.0012) &  ~~~0.0065 (0.0013) \\
$\alpha_1$ &                 --  &                 --  &&  ~~~0.6964 (0.0777) &  ~~~0.7058 (0.0767) \\
 $\beta_1$ &                 --  &                 --  &&  ~~~0.2495 (0.0639) &  ~~~0.2397 (0.0623) \\ \hline
    Loglik &              616.34 &              616.39 &&              689.64 &              689.76 \\
    AIC    &           $-$3.3852 &	         $-$3.3855 &&           $-$3.7813 &           $-$3.7820 \\
    BIC    &           $-$3.3097 &	         $-$3.3099 &&           $-$3.6842 &           $-$3.6848 \\
    RSS    &              0.6798 &              0.6796 &&              0.6962 &              0.6977 \\
   JB-test &       534.27$^{**}$ &       550.41$^{**}$ &&       563.58$^{**}$ &        556.53$^{**}$ \\
    $Q$(1) &              3.3e-6 &              0.0542 &&              3.2231 &              3.4302 \\
    $Q$(3) &              1.6662 &              1.6532 &&              7.6838 &              7.9635 \\
   $Q$(12) &              9.7379 &              9.8820 &&              19.555 &              19.805 \\
  $Q^2$(1) &        41.224$^{**}$ &      40.405$^{**}$ &&              0.0344 &              0.0454 \\
  $Q^2$(3) &        44.417$^{**}$ &      43.264$^{**}$ &&              2.8419 &              3.1277 \\
 $Q^2$(12) &        46.474$^{**}$ &      45.255$^{**}$ &&              15.216 &              17.064 \\ \hline
\end{tabular}
\end{center}
{\it Note:} ``$**$'' and ``$*$'' indicate that the test statistic is significant at 1\% and 5\% levels, respectively. The standard deviation
errors of parameter estimates are reported in parentheses.
\end{table}

In this example we study the monthly U.S. personal saving rate from January 1990 to December 2019 with $360$ observations, shown in the upper left panel of Figure \ref{fig:logit-Beta-GARMA-GARCH}. This seasonal adjusted monthly series is retrieved from FRED, the Federal Reserve Bank of St. Louis (https://fred.stlouisfed.org/series/PSAVERT). We assume that the saving rate is in the range $(0,0.15)$, and hence multiply the series by $20/3$. The unit-root test with Phillips-Perron test statistic indicates that the series is stationary. Using BIC and standard ARMA modeling, the order of the ARMA process is determined as $p=5$ and $q=0$. We then use $p=5$, $q=0$ and $r=s=1$ for the logit-Beta-GARMA-GARCH model, which can be represented as
\begin{align*}
  y_t\mid\F_{t-1}\sim \hbox{Beta}(a_t,b_t),&\quad \logit(y_t)=\phi_0+\sum_{j=1}^5\phi_j\logit(y_{t-j})+\varepsilon_t,\\
  \sigma_t^2=&\omega+\alpha_1\varepsilon_{t-1}^2+\beta_1\sigma_{t-1}^2,
\end{align*}
where $\varepsilon_t=\logit(y_t)-g(a_t,b_t)$. We then estimate the logit-Beta-M-GARMA(5,0) model and logit-Beta-GARMA(5,0)-GARCH(1,1) model using different estimators.

\begin{figure}[!t]
\begin{center}
    \includegraphics[width=1\textwidth]{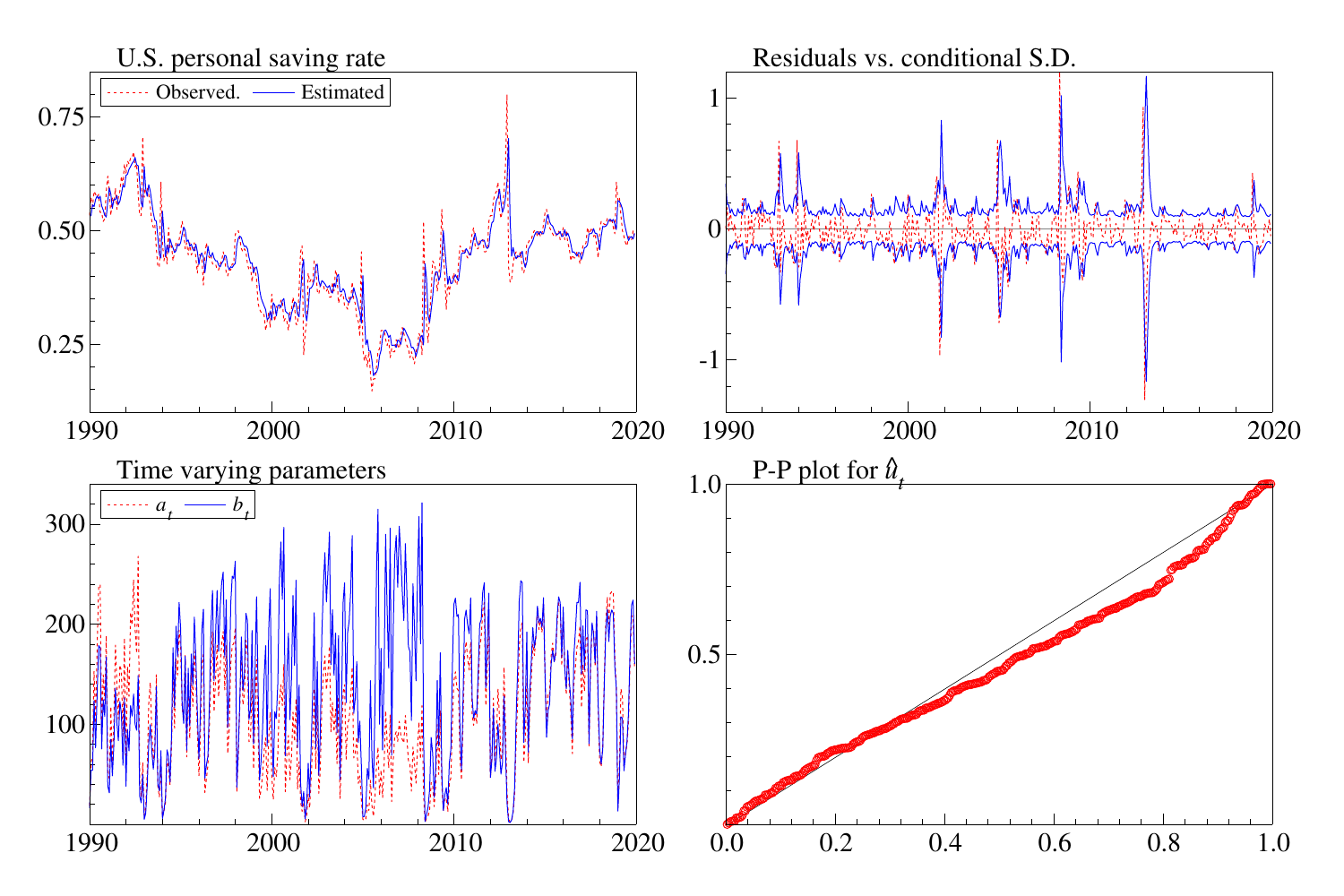}
\caption{Modeling U.S. personal saving rate using GARMA-GARCH model.}\label{fig:logit-Beta-GARMA-GARCH}
\end{center}
\end{figure}

Table \ref{table4} reports the estimation results and some summary statistics.
Again, the log-Gamma-M-GARMA model has an extra time invariant parameter $\tau$.
From the parameter estimates shown in the upper panel, we can see that
the estimated AR coefficients are significantly different under the two models. In particular,
the coefficient $\phi_5$ is statistically significant under the logit-Beta-M-GARMA model, but insignificant under
the logit-Beta-GARMA-GARCH model.
Moreover, the GARCH coefficients $\alpha_1$ and $\beta_1$ are strongly significant, indicating that adding the GARCH process to the logit-Beta-M-GARMA model is reasonable.

In the lower-left panel of Table \ref{table4}, the significant $Q^2(m)$ statistics show that the squared standardized residuals of the logit-Beta-M-GARMA model, $\hat{e}_t^2=\hat{\varepsilon}_t^2/\hat{\sigma}_t^2$, have strong autocorrelations.
However, under the logit-Beta-GARMA-GARCH model, the portmanteau test results in the lower-right panel show no significant autocorrelations in the squared standardized residuals $\hat{e}_t^2=\hat{\varepsilon}_t^2/\hat{\sigma}_t^2$,
indicating the advantage of including the GARCH type conditional variance process in the model. 

Figure \ref{fig:logit-Beta-GARMA-GARCH} depicts some features of the estimated
logit-Beta-GARMA(5,0)-GARCH(1,1) model using the MLE. The estimated conditional mean $\hat{y}_t=\hat{a}_t/(\hat{a}_t+\hat{b}_t)$ is shown together with the observed series in the upper left panel of the figure.  The conditional standard deviation $\hat{\sigma}_t$ (upper right panel) are shown in the upper right panel along with the estimated residuals $y_t-\hat{y}_t$, showing reasonable coverage. The lower left panel shows the estimated time-varying parameters $\hat{a}_t$ and $\hat{b}_t$, indicating strong time varying behavior of the two parameters.
The lower right panel shows the
P-P plot similar to that in Figure \ref{fig:log-Gamma-GARMA-GARCH},
indicating that the logit-Beta-GARMA-GARCH model is reasonable for this time series data.

\subsection{Stock returns}

In this example, the proposed GHSST-GARMA-GARCH model in Section 2.3 is employed to investigate the skewness, fat-tail and volatility in stock returns. We use the daily log returns of Standard \& Poor 500 Index (SP500) from January 3, 2000 to June 30, 2020 with $5157$ observations, obtained from Yahoo Finance ({\it https://finance.yahoo.com/}). The upper-left panel of Figure \ref{fig:GHSST-GARMA-GARCH} shows the time series.

\begin{table}[!t]
\renewcommand{\arraystretch}{1.20}
\small \tabcolsep 10.4mm
\caption{Estimation results of GHSST-GAR(1)-GARCH(1,1) model.}\label{table5}
\begin{center}
\begin{tabular}{cccccccccccc}
\hline\hline
 Parameter &                  Normal  &             Student-$t$  &             GHSST        \\ \hline
  $\phi_0$ &       ~~~0.0628 (0.0110) &       ~~~0.0764 (0.0102) &       ~~~0.0636 (0.0104) \\
  $\phi_1$ &       $-$0.0602 (0.0151) &       $-$0.0632 (0.0140) &       $-$0.0729 (0.0141) \\
  $\omega$ &       ~~~0.0203 (0.0029) &       ~~~0.0088 (0.0019) &       ~~~0.0121 (0.0025) \\
$\alpha_1$ &       ~~~0.1275 (0.0100) &       ~~~0.0807 (0.0078) &       ~~~0.1196 (0.0102) \\
 $\beta_1$ &       ~~~0.8627 (0.0099) &       ~~~0.8771 (0.0105) &       ~~~0.8773 (0.0103) \\
     $\nu$ &                      --  &       ~~~5.9138 (0.5056) &       ~~~6.9631 (0.6246) \\
    $\tau$ &                      --  &                      --  &       $-$0.2006 (0.0502) \\ \hline
    Loglik &               $-$7089.7  &               $-$6969.3  &               $-$6951.1  \\
    AIC    &                  2.7515  &	        	     2.7052  &                  2.6985  \\
    BIC    &                  2.7611  &                  2.7145  &                   2.7107 \\
    $Q$(1) &                  0.5559  &                  0.8591  &                  2.3289  \\
    $Q$(5) &                  8.0251  &                  8.4076  &                  10.189  \\
   $Q$(22) &                  32.868  &                  33.618  &                  30.475  \\
  $Q^2$(1) &                  1.1042  &                  0.6035  &                  0.3015  \\
  $Q^2$(5) &                  5.1283  &                  3.8969  &                  4.4023  \\
 $Q^2$(22) &                  21.172  &                  17.881  &                  23.367  \\ \hline
\end{tabular}
\end{center}
{\it Note:} ``$**$'' and ``$*$'' indicate that the test statistic is significant at 1\% and 5\% levels, respectively. The standard deviation
errors of parameter estimates are reported in parentheses.
\end{table}

We consider a GHSST-GAR(1)-GARCH(1,1) model specified as follows:
\begin{align*}
    y_t\mid\F_{t-1}\sim&~\mathrm{GHSST}(\xi_t,\varsigma_t,\nu,\tau),\quad y_t=\phi_0+\phi_1y_{t-1}+\varepsilon_t,\\
    &\sigma_t^2=\omega+\alpha_1\varepsilon_{t-1}^2+\beta_1\sigma_{t-1}^2.
\end{align*}
We also estimate the AR(1)-GARCH(1,1) models with normal and Student-$t$ distributions for comparison, which can be seen as special cases of the GHSST distribution.

Table \ref{table5} reports the estimation results based on different models. First, according to the the portmanteau tests in the lower panel, the residuals and squared standardized residuals have no significant autocorrelations for all three models, indicating that all the models can explain the conditional heteroskedasticity in the process.
Second, the parameter estimates $\hat{\nu}$ and $\hat{\tau}$ for  GHSST-GAR(1)-GARCH(1,1) model in the upper panel are significant, showing that the log returns are indeed leptokurtic and left skewed. We also see the advantage of the GHSST model over the normal and Student-$t$ settings, with its larger log-likelihood value.

\begin{figure}[!t]
\begin{center}
    \includegraphics[width=1\textwidth]{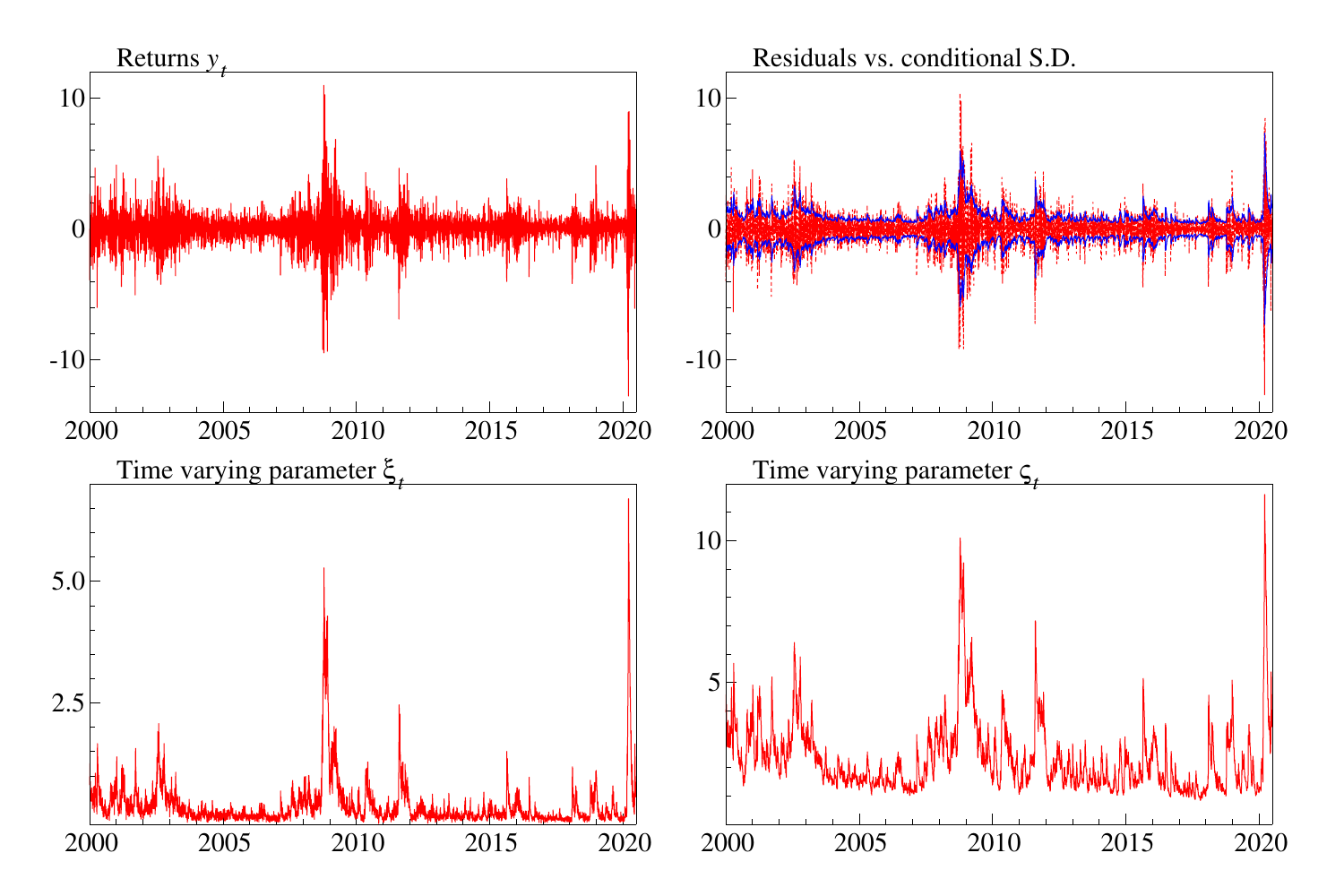}
\caption{Modeling S\&P 500 returns using GHSST-GAR(1)-GARCH(1,1) model.}\label{fig:GHSST-GARMA-GARCH}
\end{center}
\end{figure}

Figure \ref{fig:GHSST-GARMA-GARCH}
plots the residuals along with conditional standard deviation $\hat{\sigma}_t$,
(the upper right panel) and the two time varying parameters $\hat{\xi}_t$ and $\hat{\varsigma}_t$ in the lower left and right panels, respectively. The figures show that the GHSST-GAR(1)-GARCH(1,1) captures the salient features of the observed time series.

\section{Concluding remarks}

The GARMA model is a useful class of data-based models for analyzing non-Gaussian time series. Since it only models the conditional mean process though an ARMA formation, it lacks the ability to address conditional heteroskedasticity often encountered in applications. In this paper we extend the GARMA model to GARMA-GARCH model with an additional model assumption on the conditional variance process, under a GARCH formation. In addition, three special GARMA-GARCH models are proposed for non-negative time series, proportional time series, and skewed and heavy tailed financial time series.
Furthermore, maximum likelihood and quasi Gaussian likelihood estimation procedures are introduced, and their finite sample performances are illustrated. We find that the GMLE performs very well and the corresponding parameter estimates can be used as starting values of the MLE procedure. Three real data examples are used to demonstrate the properties of the proposed models.

There are several directions to extend the model. It is possible to introduce additional model assumptions on the higher moments of the conditional distribution so to capture additional time varying features of the conditional distribution. Modeling additional higher moments can be fruitful even when the conditional distribution is specified with a small number of parameters, since a generalized moment method type of procedure can be used to enhance the parameter estimation. Another direction is to extend the model to analyzing multivariate non-Gaussian time series, a topic less studied. The GARMA-GARCH framework conveniently introduces dependencies among the component time series though dependencies among the conditional mean and variance processes under a multivariate version of the ARMA and GARCH formations.



\section*{Acknowledgement}

\thanks{Zheng acknowledges financial supports from  the National Science Foundation of China (No. 71973110, 71371160) and the Fundamental Research Fund  for the Central Universities (No. 20720191038).}

\newpage

\renewcommand{\thesection}{\Alph{section}}
\setcounter{section}{0}

\newpage
\section{Appendix}

We give the proofs of Theorem~\ref{thm:gamma} and \ref{thm:beta} in the appendix. The proof relies on the geometric drift condition discussed in Chapter~15 of \cite{meyn:2009}. Lemma~1 of \cite{ZXC2015} is a convenient wrap-up of the tool that we need, so we repeat it here.

Let $\{X_n\}_{n\geq 0}$ be a Markov chain on the state space
$\vX$, equipped with some $\sigma$-field $\mathcal{B}(\vX)$. Let
$\{P(x,A),\;x\in\vX,A\in\sB(\vX)\}$ be the transition probability
kernel. The {\it geometric drift condition} requires that there exists an extended-valued
function $\mathcal{V}:\;\vX\rightarrow[1,\infty]$, a measurable set $C$, and
constants $b<\infty$, $\beta>0$ such that
\begin{equation}
  \label{eq:V4}
  \Delta {\cal V}(x):=\int P(x,dy){\cal V}(y) - {\cal V}(x)
  \leq -\beta {\cal V}(x)+bI_C(x), \quad x\in\vX.
\end{equation}

\begin{lemma}
  \label{thm:regular}
  Suppose $\{X_t\}$ is a $\psi$-irreducible and aperiodic Markov chain. If for some $m$,
  the skeleton $\{X_{mt}\}$ satisfies the drift condition \eqref{eq:V4}
  for a petite set $C$
  and a function ${\cal V}$ which is everywhere finite. Then $\{X_t\}$ is
  geometrically ergodic, and $\int_{\vX} V(x)\pi(dx)<\infty$, where
  $\pi$ is the unique invariant probability measure.
\end{lemma}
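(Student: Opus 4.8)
The plan is to obtain Lemma~\ref{thm:regular} by assembling standard facts about $\psi$-irreducible Markov chains from \cite{meyn:2009} (Chapters~5, 14 and 15); no new estimates are needed, because the quantitative content of the argument --- verifying the geometric drift condition \eqref{eq:V4} for a concrete choice of $\mathcal{V}$ --- is exactly what is deferred to the proofs of Theorems~\ref{thm:gamma} and \ref{thm:beta}. First I would pass to the $m$-skeleton $\{X_{mt}\}$. Since $\{X_t\}$ is $\psi$-irreducible and \emph{aperiodic}, every skeleton $\{X_{mt}\}$ is itself $\psi$-irreducible and aperiodic, sharing the maximal irreducibility measure (\cite{meyn:2009}, Section~5.4); this is the single place where aperiodicity is essential, since a periodic chain can have a reducible skeleton. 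The hypothesis then says precisely that $\{X_{mt}\}$ is a $\psi$-irreducible, aperiodic chain satisfying the geometric drift condition \eqref{eq:V4} toward a petite set $C$ with an everywhere-finite $\mathcal{V}\geq 1$.

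Next I would apply the drift criterion for geometric ergodicity to this skeleton. By \cite{meyn:2009} (Chapter~15, in particular the geometric ergodic theorem, together with the invariant-moment result of Chapter~14), $\{X_{mt}\}$ is positive Harris recurrent, admits a unique invariant probability measure $\pi_m$, is geometrically ergodic in the $\mathcal{V}$-norm (it converges to $\pi_m$ at a geometric rate, uniformly over functions dominated by $\mathcal{V}$), and satisfies $\int_{\vX}\mathcal{V}\,d\pi_m<\infty$.

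It remains to lift these conclusions back to $\{X_t\}$. An invariant probability measure of an aperiodic chain is invariant for each of its skeletons and conversely, so $\{X_t\}$ has the unique invariant measure $\pi=\pi_m$; geometric ergodicity of $\{X_t\}$ then follows from that of the skeleton via the standard skeleton equivalence for aperiodic chains (\cite{meyn:2009}, Chapter~15); and $\int_{\vX}\mathcal{V}\,d\pi=\int_{\vX}\mathcal{V}\,d\pi_m<\infty$ is immediate from $\pi=\pi_m$. I expect the only delicate step to be this last transfer between the chain and its skeleton --- specifically, confirming that aperiodicity is exactly the condition keeping the skeleton $\psi$-irreducible, and that geometric ergodicity and the invariant law are genuinely common to a chain and all its skeletons; the remaining steps are direct invocations of the cited theorems.
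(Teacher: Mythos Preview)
Your proposal is correct, but there is nothing to compare: the paper does not give its own proof of this lemma. It is stated as a direct restatement of Lemma~1 of \cite{ZXC2015}, itself described as ``a convenient wrap-up'' of results from Chapter~15 of \cite{meyn:2009}, and is used as a black box in the proofs of Theorems~\ref{thm:gamma} and~\ref{thm:beta}. Your sketch --- pass to the $m$-skeleton, use aperiodicity to retain $\psi$-irreducibility, apply the geometric ergodic theorem to the skeleton, and then transfer geometric ergodicity and the invariant measure back to the original chain --- is exactly the standard assembly of Meyn--Tweedie results that underlies the lemma, so you have correctly reconstructed the omitted argument.
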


\begin{proof}[Proof of Theorem~\ref{thm:gamma}]

According to Remark~\ref{rk:gamma0}, we can embed the process $\{\varepsilon_t\}$ into the following Markov chain $\{Z_t\}$. Given $Z_{t-1}$, we generate $\tilde y_t$ as $\tilde y_t\sim \mathrm{Gam}(c_t,1)$, where the parameter $c_t$ is determined by
$\psi_1(c_t)=\sigma_t^2:=\sigma^2+\beta'\bm A Z_{t-1}$.
We then set $\varepsilon_t=h(\tilde y_t)-\psi(c_t)$, $\zeta_t=\varepsilon_t^2-\psi_1(c_t)$, and define $Z_t=\hA Z_{t-1} + (1,0,\ldots,0)'\zeta_t$. It holds that
  \begin{align*}
    E(\|Z_t\|^2\mid Z_{t-1})
    & = Z_{t-1}'\hA'\hA Z_{t-1} + \mathrm{Var}(\zeta_t\mid Z_t).
  \end{align*}
  According to \S2.2 of \cite{chan1993},
  \begin{equation*}
    \mathrm{Var}(\zeta_t\mid Z_{t-1}) = \mathrm{Var}(\zeta_t\mid \sigma^2_t) = \psi_3(c_t) - [\psi_1(c_t)]^2,
  \end{equation*}
  where $\psi_m(z)$ is the order-$m$ polygamma function defined as the $(m+1)$-th derivative of the function $\log \Gamma(z)$. The polygamma function has the following two properties
  \begin{equation}
  \label{eq:polygamma}
      \psi_m(z+1) = \psi_m(z) + \frac{(-1)^m m!}{z^{m+1}},\quad
      \psi_m(z) = (-1)^{m+1}m!\sum_{k=0}^{\infty}\frac{1}{(z+k)^{m+1}}.
  \end{equation}
  It follows that for each $\kappa>0$, there exists a $D_\kappa>0$ such that
  \begin{equation}
    \label{eq:1step}
    \mathrm{Var}(\zeta_t\mid \sigma^2_t)  \leq 5(1+\kappa)(\beta'\bm A Z_{t-1})^2+D_\kappa,
  \end{equation}
  and
  \begin{align*}
    E(\|Z_t\|^2\mid Z_{t-1}) & \leq Z_{t-1}'\hA'\hA Z_{t-1} + 5(1+\kappa)(\beta'\bm A Z_{t-1})^2 + D_{\kappa}\\
    & = Z_{t-1}'\hA'\hA Z_{t-1} +  5(1+\kappa) Z_{t-1}'\hA_1'\hA_1 Z_{t-1} + D_{\kappa} \\
    & \leq (1+\kappa)Z_{t-1}'\hB_1 Z_{t-1} + D_\kappa.
  \end{align*}
  Next, by taking a double expectation
  \begin{align*}
    E(\|Z_t\|^2\mid Z_{t-2}) & \leq  (1+\kappa)E[Z_{t-1}' \hB_1 Z_{t-1} \mid Z_{t-2}] + {D_\kappa}.
  \end{align*}
  Applying \eqref{eq:1step} again,
  \begin{align*}
    E[Z_{t-1}' \hB_1 Z_{t-1} \mid Z_{t-2}]
    & = Z_{t-2}'\hA'\hB_1\hA Z_{t-2} + w_1\mathrm{Var}(\zeta_{t-1}\mid \sigma^2_{t-1}) \\
    & \leq Z_{t-2}'\hA'\hB_1\hA Z_{t-2} + 5w_1(1+\kappa)(\beta'\bm A Z_{t-2})^2 + w_1D_\kappa \\
    & = Z_{t-2}'\hA'\hB_1\hA Z_{t-2} + 5(1+\kappa)\lambda Z_{t-2}'\hA_1'\hB_1\hA_1 Z_{t-2}+ w_1D_\kappa\\
    & \leq (1+\kappa) Z_{t-2}'\hB_2 Z_{t-2} + w_1D_\kappa.
  \end{align*}
  It follows that
  \begin{align*}
    E(\|Z_t\|^2\mid Z_{t-2}) & \leq (1+\kappa)^2 Z_{t-2}'\hB_2 Z_{t-2} + (1+\kappa)w_1D_\kappa + D_\kappa,
  \end{align*}
  where $w_1$ is the $(1,1)$-th entry of $\hB_1$.

  Following the same argument recursively, we have for any positive integer $h$,
  \begin{align*}
    E(\|Z_t\|^2\mid Z_{t-h})
    & \leq (1+\kappa)^h Z_{t-h}'\hB_h Z_{t-h} + \sum_{j=0}^{h-1} (1+\kappa)^{j}w_{j}D_\kappa.
  \end{align*}
  where $w_{j}$ is the $(1,1)$-th entry of $\hB_j$ for $j\geq 1$, and
  $w_0=1$.
  Choose $h$, such that the operator norm of $\hB_h$ is less than
  $1-2\iota$ for some $\iota>0$. Choosing $\kappa$ such that
  $(1+\kappa)^h(1-2\iota)<1-\iota$, we have
  \begin{align*}
    E(\|Z_t\|^2\mid Z_{t-h})
    & \leq (1-\iota) \|Z_{t-h}\|^2 + \sum_{j=0}^{h-1} (1+\kappa)^{j}w_{j}D_\kappa.
  \end{align*}
  Since the log-Gamma distribution is absolutely continuous on $\mathbb{R}$, from here it is easy to verify
  that for the skeleton $\{Z_{th}\}_{t\geq 1}$, the drift condition \eqref{eq:V4} is
  met with ${\cal V}(x)=\|x\|^2+1$ for $x\in\mathbb{R}^p$, and other conditions of Lemma~\ref{thm:regular} are also fulfilled. Therefore, the chain $\{Z_t\}$ is geometrically ergodic,
  and $E_\pi \|Z_t\|^2 <\infty$, where $\pi$ is the unique invariant
  probability measure. Because $\varepsilon_t^2=\sigma^2+\beta'Z_t$, it follows that $E_{\pi}\varepsilon_t^4<\infty$.

Finally, once the strictly stationary process $\{\varepsilon_t\}$ has been generated, the condition $\phi(z)\neq 0$ for $|z|\leq 1$ guarantee that the $\{h(y_t)\}$ process generated according to \eqref{eq:arma} admits a strictly stationary solution such that $E[h(y_t)^4]<\infty$.
\end{proof}

\begin{proof}[Proof of Theorem~\ref{thm:beta}]
For the logit-Beta-GARMA-GARCH model, the process $\{\varepsilon_t\}$ cannot be generated by itself, and has to be generated together with $y_t$. Therefore, we need to consider the Markov chain defined in \eqref{eq:MC1}. For notational simplicity, denote $W_t=(X_t',Z_t)'$. 
Using the relationship between the Beta and Gamma distribution, it can be shown that
\begin{align*}
    g(a_t,b_t) & =\psi(a_t) - \psi(b_t) = \mu + \delta'\Phi X_{t-1}, \\
    V(a_t,b_t) & = \psi_1(a_t) + \psi_1(b_t) = \sigma^2 + \beta'\hA Z_{t-1},\\
    \Var(\zeta_t\mid W_{t-1}) & = \psi_3(a_t) - [\psi_1(a_t)]^2 + \psi_3(b_t) - [\psi_1(b_t)]^2.
\end{align*}
Suppose for some, the operator norms of both $\hB_h$ and $\Phi^h$ are less than $1-2\iota$ with some $\iota>0$. Let $v_j$ be the $(1,1)$-th entry of $(\Phi^j)'\Phi^j$, and $w_j$ be the $(1,1)$-th entry of $\hB_j$ for $j\geq 1$, and set $v_0=w_0=1$. Note that $w_j>0$ for each $j$.
By \eqref{eq:polygamma}, for each $\kappa>0$, there exists a $D_\kappa>0$ such that
\begin{equation}
    \label{eq:1step_beta}
     v_j\Var(\varepsilon_{t-j}\mid W_{t-j-1}) + w_j\mathrm{Var}(\zeta_{t-j}\mid W_{t-j-1}) \leq 5w_j(1+\kappa)(\beta'\bm A Z_{t-j-1})^2+D_{\kappa},\; \; 0\leq j<h.
\end{equation}
It follows that (using the preceding inequality with $j=0$)
\begin{align*}
    E(\|W_t\|^2\mid W_{t-1}) & \leq X_{t-1}'\Phi'\Phi X_{t-1} + Z_{t-1}'\hA'\hA Z_{t-1} + 5(1+\kappa)(\beta'\bm A Z_{t-1})^2 + D_{\kappa}\\
    & \leq X_{t-1}'\Phi'\Phi X_{t-1} + (1+\kappa)Z_{t-1}'\hB_1 Z_{t-1} + D_\kappa.
\end{align*}
Since
\begin{align*}
    E(X_{t-1}'\Phi'\Phi X_{t-1} \mid W_{t-2}) & = X_{t-2}'(\Phi^2)'\Phi^2X_{t-2} + v_1\mathrm{Var}(\varepsilon_{t-1}\mid W_{t-2}),
\end{align*}
and
\begin{align*}
    E(Z_{t-1}'\hB_1 Z_{t-1} \mid W_{t-2}) & = Z_{t-2}'\hA'\hB_1\hA Z_{t-2} + w_1\mathrm{Var}(\zeta_{t-1}\mid W_{t-2}),
\end{align*}
applying \eqref{eq:1step_beta} again with $j=1$ we have
\begin{align*}
    E(\|W_t\|^2\mid W_{t-2}) & \leq X_{t-2}'(\Phi^2)'\Phi^2X_{t-2} + (1+\kappa) Z_{t-2}'\hA'\hB_1\hA Z_{t-2} \\
    & \qquad + 5w_j(1+\kappa)^2(\beta'\bm A Z_{t-j-1})^2+(2+\kappa)D_{\kappa} \\
    & \leq X_{t-2}'(\Phi^2)'\Phi^2X_{t-2} + (1+\kappa)^2Z_{t-2}'\hB_2 Z_{t-2}+(2+\kappa)D_{\kappa}.
\end{align*}
Similar to the proof of Theorem~\ref{thm:gamma}, applying \eqref{eq:1step_beta} recursively, we have
\begin{align*}
    E(\|W_t\|^2\mid W_{t-h})
    & \leq X_{t-h}'(\Phi^h)'\Phi^hX_{t-h} +(1+\kappa)^h Z_{t-h}'\hB_h Z_{t-h} + [(1+\kappa)^h-1]D_\kappa/\kappa.
\end{align*}
Choosing $\kappa$ such that $(1+\kappa)^h(1-2\iota)<1-\iota$, it holds that
\begin{align*}
    E(\|W_t\|^2\mid W_{t-h})
    & \leq (1-\iota) \|W_{t-h}\|^2 + [(1+\kappa)^h-1]D_\kappa/\kappa.
\end{align*}
The rest of the proof follows the same argument of Theorem~\ref{thm:gamma}, so we skip the details. The proof is complete.
\end{proof}


\begin{thebibliography}{}%
\makeatletter%
    \setlength{\itemsep}{0pt}%
\makeatother%
\small

\bibitem[Aas and Haff(2006)]{AH2006}
Aas, K., Haff, I. H., (2006), ``The generalized hyperbolic skew Student's t-distribution,'' \emph{Journal of Financial Econometrics}, 4, 275-309.

\bibitem[Abramowitz and Stegun(1972)]{AS1972}
Abramowitz, M., Stegun, I. A., (1972), \emph{Handbook of Mathematical Function}, New York: Dover.

\bibitem[Archakov, Hansen and Lunde(2020)]{AHL2020}
Archakov, I., Hansen, P. R., Lunde, A., (2020), "A multivariate realized GARCH model," Papers 2012.02708, arXiv.org.

\bibitem[Barndorff-Nielsen and Shephard(2002)]{BNS2002}
Barndorff-Nielsen, O. E., Shephard, N., (2002), ``Econometric analysis of realized volatility and its use in estimating stochastic volatility models,'' \emph{Journal of the Royal Statistical Society, Series B}, 64, 253-280.

\bibitem[Benjamin, Rigby and Stasinopoulos(2003)]{BRS2003}
Benjamin, M. A., Rigby, R. A., Stasinopoulos, D. M., (2003), ``Generalized autoregressive moving average models,'' \emph{Journal of the American Statistical Association}, 98, 214-223.

\bibitem[Berkes, Horv\'{a}th, and Kokoszka(2003)]{BHK2003}
Berkes, I., Horv\'{a}th, L., Kokoszka, P., (2003) GARCH processes: Structure and estimation. \emph{Bernoulli}, 9, 201-227.

\bibitem[Bollerslev(1986)]{Bollerslev1986}
Bollerslev, T., (1986), ``Generalized autoregressive conditional heteroskedasticity,'' \emph{Journal of Econometrics}, 31, 307-327.

\bibitem[Bollerslev(1987)]{Bollerslev1987}
Bollerslev, T., (1987), ``A conditionally heteroskedastic time series model for speculative prices and rates of return,'' \emph{Review of Economics and Statistics}, 69, 542-547.

\bibitem[Chan(1993)]{chan1993}
Chan, S. P. (1993) ``A statistical study of log-gamma distribution''. \emph{PhD Thesis}, McMaster University.

\bibitem[Chen et al.(2013)]{CMC2013}
Chen, S., Min, W., Chen, R., 2013. ``Model identification for time series with dependent innovations''. \emph{Statistica Sinica} 23, 873-899.

\bibitem[Corsi, Mittnik, Pigorsch and Pigrosch(2008)]{CMPP2008}
Corsi, F., Mittnik, S., Pigorsch, C., and Pigrosch, U. (2008), ``The Volatility of Realized Volatility,'' \emph{Econometric Reviews}, 27, 46-78.

\bibitem[Davis, Dunsmuir and Streett(2003)]{DDS2003}
Davis, R. A.,  Dunsmuir, W. T. M., Streett, S. B. (2003), ``Observation-driven models for Poisson counts,'' \emph{Biometrika}, 90(4), 777-790.

\bibitem[Davis and Wu(2009)]{DW2009}
Davis, R., Wu, R., (2009), ``A negative binomial model for time series of counts,'' \emph{Biometrika}, 96, 735-749.

\bibitem[Deschamps(2012)]{Deschamps2012}
Deschamps, P. J., (2012), ``Bayesian estimation of generalized hyperbolic skewed student GARCH models,'' \emph{Computational Statistics \& Data Analysis}, 56(11), 3035-3054.

\bibitem[Doornik(2007)]{Doornik2007}
Doornik, J., (2007), \emph{Ox: An Object-Oriented Matrix Programming Language}, International Thomson Business Press.

\bibitem[Drost and Nijman(1993)]{DN1993}
Drost, F. C., Nijman, T. E., (1993), ``Temporal aggregation of GARCH processes,'' \emph{Econometrica}, 61(4), 909-27.

\bibitem[Engle(1982)]{Engle1982}
Engle, R. F., (1982), ``Autoregressive conditional heteroscedasticity with estimates of the variance of United Kingdom inflation,'' \emph{Econometrica}, 50, 987-1007.

\bibitem[Engle(2002a)]{Engle2002}
Engle, R. F., (2002a), ``New frontiers for ARCH models,'' \emph{Journal of Applied Econometrics}, 17, 425-446.

\bibitem[Engle(2002b)]{Engle2002b}
Engle, R. F., (2002b), "Dynamic conditional correlation: A simple class of multivariate generalized autoregressive conditional heteroskedasticity models," Journal of Business and Economic Statistics 20, 339–350.

\bibitem[Engle and Gallo(2006)]{EG2006}
Engle, R. F., Gallo, G. M., (2006), ``A multiple indicators model for volatility using intra-daily data,'' \emph{Journal of Econometrics}, 131, 3-27.

\bibitem[Engle and Kroner(1995)]{EK1995}
Engle, R. F., Kroner, K. F., (1995), "Multivariate simultaneous generalized ARCH," Econometric Theory 11, 122-150.

\bibitem[Engle and Russell(1998)]{ER1998}
Engle, R. F., Russell, J. R., (1998), ``Autoregressive conditional duration: a new model for irregularly spaced transaction data,'' \emph{Econometrica}, 66, 1127-1162.

\bibitem[Escanciano(2009)]{Escanciano2009}
Escanciano, J. C., (2009), ``Quasi-maximum likelihood estimation of semi-strong GARCH models,'' \emph{Econometric Theory}, 25, 561-570.

\bibitem[Ferland, Latour and Oraichi(2006)]{FLO2006}
Ferland, R., Latour, A., Oraichi, D., (2006), ``Integer-valued GARCH process,'' \emph{Journal of Time Series Analysis}, 27(6), 923-942.

\bibitem[Fokianos and Fried(2010)]{FF2010}
Fokianos, K., Fried, R., (2010), ``Intereventions in INGARCH processess,'' \emph{Journal of Time Series Analysis}, 31, 210-225.

\bibitem[Fokianos, Rahbek and Tj{\o}stheim(2009)]{FRT2009}
Fokianos, K., Rahbek, A., Tj{\o}stheim, D., (2009), ``Poisson autoregression,'' \emph{Journal of the American Statistical Association}, 104, 1430-1439.

\bibitem[Francq and Zako\"{\i}an(2004)]{FZ2004}
Francq, C., Zako\"{\i}an, J.-M., (2004), ``Maximum likelihood estimation of pure GARCH and ARMA-GARCH processes,'' \emph{Bernoulli}, 10(4), 605-637.

\bibitem[Francq and Zako\"{\i}an(2010)]{FZ2010}
Francq, C., Zako\"{\i}an, J.-M., (2010), \emph{GARCH Models: Structure, Statistical Inference and Financial Applications}. John Wiley \& Sons Ltd.

\bibitem[Francq and Zako\"{\i}an(2012)]{FZ2012}
Francq, C., Zako\"{\i}an, J.-M., (2012), ``Strict stationarity testing and estimation of explosive and stationary generalized autoregressive conditional heteroscedasticity models,'' \emph{Econometrica}, 80(2), 821-861.

\bibitem[Hall and Heyde(1980)]{HH1980}
Hall, P, Heyde, C. C., (1980), \emph{Martingale limit theory and its application}, Academic Press Inc., New York.

\bibitem[Hansen and Lunde(2005)]{HL2005}
Hansen, P. R., Lunde, A., (2005), ``A realized variance for the whole day based on intermittent high-frequency data,'' \emph{Journal of Financial Econometrics}, 3, 525-554.

\bibitem[Jarque and Bera(1987)]{JB1987}
Jarque, C. M., Bera, A. K., (1987), ``A test for normality of observations and regression residuals,'' \emph{International Statistical Review}, 55, 163-172.

\bibitem[Lambert and Laurent(2001)]{LL2001}
Lambert, P., Laurent, S., (2001), ``Modeling financial time series using GARCH-type models and a skewed student density,'' Mimeo, \emph{Universit\'{e} de Li\`{e}ge}.

\bibitem[Lee and Hansen(1994)]{LH1994}
Lee, S.-W., Hansen, B. E., (1994), ``Asymptotic theory for the GARCH(1,1) quasi-maximum likelihood estimator,'' \emph{Econometric Theory}, 10, 29-52.

\bibitem[Ling(2007)]{Ling2007}
Ling, S., (2007), ``Self-weighted and local quasi-maximum likelihood estimators for ARMA-GARCH/IGARCH models,'' \emph{Journal of Econometrics}, 140, 849-873.

\bibitem[Ljung and Box(1978)]{LB1978}
Ljung, G. M., Box, G. E. P., (1978), ``On a measure of lack of fit in time series models,'' \emph{Biometrika}, 65, 297-303.

\bibitem[Lumsdaine(1996)]{Lumsdaine1996}
Lumsdaine, R.L., (1996), ``Consistency and asymptotic normality of the quasi-maximum likelihood estimator in IGARCH(1,1) and covariance stationary GARCH(1,1) models,'' \emph{Econometrica}, 64, 575-596.

\bibitem[L\"{u}tkepohl(2005)]{Lu2005}
L\"{u}tkepohl, H., 2005, \emph{New Introduction to Multiple Time Series Analysis}. Springer, Berlin.

\bibitem[Nocedal and Wright(1999)]{NW1999}
Nocedal, J., Wright, S., (1999), \emph{Numerical Optimization}. Springer-Verlag, New York.

\bibitem[McCullagh and Nelder(1989)]{MN1989}
McCullagh, P., Nelder, J. A., (1989), \emph{Generalized Linear Models, 2ed}, Chapman \& Hall/CRC, London.

\bibitem[McLeod and Li(1983)]{ML1983}
McLeod, A. I., Li, W. K., (1983), ``Diagnostic checking ARMA time series models using squared residual autocorrelations,'' \emph{Journal of Time Series Analysis}, 4, 269-273.

\bibitem[Meyn and Tweedie(2009)]{meyn:2009}
Meyn, S., Tweedie, R.L., 2009. \emph{Markov chains and stochastic stability, 2ed}. Cambridge University Press, Cambridge.

\bibitem[Nelson(1991)]{Nelson1991}
Nelson, D. B., (1991), ``Conditional heteroskedasticity in asset returns: a new approach,'' \emph{Econometrica}, 59, 347-370.

\bibitem[Qian, Li and Zhu(2020)]{QLZ2020}
Qian, L., Li, Q. and Zhu, F., (2020), ``Modelling heavy-tailedness in count time series,'' \emph{Applied Mathematical Modelling}, 82, 766-784.

\bibitem[Rocha and Cribari-Neto(2009)]{RCN2009}
Rocha, A. V., Cribari-Neto, F., (2009), ``Beta autoregressive moving average models,'' \emph{Test}, 18, 529-545.

\bibitem[Scher, Cribari-Neto, Pumi and Bayer(2020)]{SCNPB2020}
Scher, V. T., Cribari-Neto, F., Pumi, G., Bayer, F. M., (2020), ``Goodness-of-fit tests for $\beta$ARMA hydrological time series modeling,'' \emph{Environmetrics}, 31, 1-19.

\bibitem[Takahashi, Omori and Watanabe(2009)]{TOW2009}
Takahashi, M., Omori, Y., Watanabe, T., (2009), ``Estimating stochastic volatility models using daily returns and realized volatility simultaneously,'' \emph{Computational Statistics \& Data Analysis}, 53, 2404-2426.

\bibitem[Taylor(1982)]{Taylor1982}
Taylor, S. J., (1982), ``Financial returns modeled by the product of two stochastic processes, a study of daily sugar prices 1961-79,'' In \emph{Time Series Analysis: Theory and Practice 1, Anderson OD (Ed.)}. North-Holland: Amsterdam; 203-226.

\bibitem[Taylor(1986)]{Taylor1986}
Taylor, S. J., (1986), \emph{Modeling Financial Time Series}, John Wiley: New York.

\bibitem[Tsay(2014)]{Tsay2014}
Tsay R.S., 2014, \emph{Multivariate Time Series Analysis with R and Financial Applications}. John Wiley, Hoboken, NJ.

\bibitem[Tsay and Tiao(1984)]{TT1984}
Tsay, R.S., Tiao, G.C., 1984. Consistent estimates of autoregressive parameters and extended sample autocorrelation function for stationary and nonstationary ARMA models. \emph{Journal of the American Statistical Association} 79, 84-96.

\bibitem[Zheng and Song(2014)]{ZS2014}
Zheng, T., Song, T., (2014), ``A realized stochastic volatility model with Box-Cox transformation,'' \emph{Journal of Business \& Economic Statistics}, 32, 593-605.

\bibitem[Zheng, Xiao and Chen(2015)]{ZXC2015}
Zheng, T., Xiao, H., Chen, R., (2015), ``Generalized ARMA models with martingalized difference errors,'' \emph{Journal of Econometrics}, 189, 492-506.

\end{thebibliography}
\end{document}